\documentclass[final,12pt]{article}
\usepackage{amsfonts,color,morefloats,pslatex,a4wide}
\usepackage{amssymb,amsthm,amsmath,latexsym,pslatex,cite}
\usepackage{lineno,hyperref,mathtools}
\usepackage{mathrsfs,amsbsy}
\usepackage{pdflscape}
\usepackage{threeparttable}
\usepackage{bm}
\usepackage{enumitem}

\newtheorem{theorem}{Theorem}[section]

\newtheorem{lemma}[theorem]{Lemma}
\newtheorem{corollary}[theorem]{Corollary}
\newtheorem{proposition}[theorem]{Proposition}

\theoremstyle{definition}
\newtheorem{definition}[theorem]{Definition}

\theoremstyle{remark}
\newtheorem{remark}[theorem]{Remark}

\makeatletter

\newcommand{\Rmnum}[1]{\expandafter\@slowromancap\romannumeral #1@}
\makeatother

\begin{document}
\date{} 
\begin{sloppypar}

\title{Improved bounds for constant-power and low-power error-correcting cooling codes
}

\author{Tingting Tong\thanks{Tingting Tong and Sihuang Hu are with the State Key Laboratory of Cryptography and Digital Economy Security, the Key Laboratory of Cryptologic Technology and Information Security, Ministry of Education, and the School of Cyber Science and Technology, Shandong University, Qingdao, Shandong 266237, China. (email: tingtingtong2000@163.com, husihuang@sdu.edu.cn).},~
Sihuang Hu}
\maketitle

\begin{abstract}
Low-power error-correcting cooling (LPECC) codes and constant-power error-correcting cooling (CPECC) codes provide error correction while controlling power consumption and thermal effects in on-chip buses. In this paper, we study binary CPECC and LPECC codes with \(e=w-3\). For CPECC codes, we extend the applicability of the upper bound previously obtained by Zhao and Zhang from the quadratic-order condition \(w\ge 2t(t+1)+2\) to \(w\ge w_0(t)\), where \(w_0(t)\sim \sqrt{2}\,t^{3/2}\). Using Steiner systems, we show that the CPECC bound is attainable and asymptotically tight for fixed \(t,w\).
For LPECC codes, we establish the new upper bound
\(\left\lfloor\frac{\binom{n+2}{3}}{\binom{w+t}{3}}\right\rfloor\)
for \(w\ge \mu(t)\), where \(\mu(t)\sim \sqrt{2}\,t^{3/2}\). This bound is strictly smaller than the previous bound of Zhao and Zhang whenever both apply, and is asymptotically tight for fixed \(t,w\) in the stated range.
\end{abstract}

{\bf Keywords:} LPECC codes, CPECC codes, Steiner systems, transversal number 

\section{Introduction}\label{sec-introduction}

Power and heat dissipation have become fundamental constraints in the design of modern integrated circuits. In an on-chip or off-chip bus, a state transition on a wire causes switching activity and Joule heating, thereby contributing to dynamic power consumption. Numerous encoding techniques have been proposed to reduce bus power consumption \cite{Benini1997,Calimera2008,CheeColbournLing2006,Petrov2004,Sotiriadis2002,Sotiriadis2003,Stan1995,Wang2007,Macii1998,Sithambaram2007,Sotiriadis2001,SotiriadisWang2000,Sundaresan2005}.
However, controlling the total switching activity alone does not prevent a small number of wires from switching repeatedly and remaining significantly hotter than the others. This motivates coding schemes that control peak temperature as well as power consumption.

To address this issue, Chee et al.~\cite{2018} introduced cooling codes, in which each message is represented by a collection of
possible transition vectors so that the encoder can avoid transitions on the currently hottest wires. Low-power cooling (LPC) codes additionally restrict the number of transitions in each transmission, while low-power error-correcting cooling (LPECC) codes further incorporate error-correction capability. 
More precisely, an $(n,t,w,e)$-LPECC code is a coding scheme for communication over a bus consisting of $n$ wires satisfying the following properties:
\begin{itemize}
 \item \textbf{Property $A(t)$:} every transmission does not cause state transitions on the $t$ hottest wires;

 \item \textbf{Property $B(w)$:} the total number of state transitions on all the wires is at most $w$ in every transmission;

 \item \textbf{Property $C(e)$:} up to $e$ transmission errors (0 received as 1, or 1 received as 0) on the $n$ wires can be corrected.
\end{itemize}
A constant-power error-correcting cooling (CPECC) code instead satisfies \textbf{Property $B'(w)$}, which requires exactly $w$ state transitions in each transmission. The size of a code is the number of its information classes. We denote the maximum sizes of binary $(n,t,w,e)$-LPECC and $(n,t,w,e)$-CPECC codes by $C(n,t,w,e)$ and $C'(n,t,w,e)$, respectively.

Several constructions and bounds have been developed for cooling codes and their variants. Chee et al.~\cite{2018} introduced cooling, low-power cooling, and error-correcting cooling codes, with constructions based on complete hypergraphs, dual codes, and sunflowers. They subsequently~\cite{2020} developed efficient encoding and decoding methods for LPC codes and introduced constant-power cooling codes. Liu and Ji~\cite{2024} used packings and resolvable designs to derive upper bounds and construct optimal LPECC codes. Zhao and Zhang~\cite{2025} obtained finite-length and asymptotic bounds for LPECC and CPECC codes with large error-correcting capability. For the case where \(e=w-2\), Liu et al.~\cite{Liu2026bounds} showed new optimal \((n,t, 4, 2)\)-CPECC codes and proposed a conjecture on general \((n,t,w,w-2)\)-CPECC codes. This conjecture was resolved by Liu et al.~\cite{LiuTongZhangZhao2026}, who also obtained new bounds and optimal constructions for CPECC and LPECC codes.

These developments motivate a closer study of the next case, $e=w-3$. For this case, Zhao and Zhang~\cite[Theorem~9]{2025} established the CPECC upper bound
\[ C'(n,t,w,w-3) \le  \left\lfloor
\frac{\binom{n}{3}}{\binom{w+t}{3}}\right\rfloor.\]
For LPECC codes, they obtained the bound~\cite[Theorem~7]{2025}
\[C(n,t,w,w-3) \le
 \frac{\binom{n}{3}}{\binom{w+t}{3}}
 + \frac{\binom{n}{2}}{\binom{w-1}{2}}
 +\frac{n}{w-2}+ 1.\]
Both bounds were proved under the condition
\[2\binom{w}{3}\ge\binom{w-t-1}{3}+\binom{w+t}{3},\]
which, for integers $t\ge1$ and $w\ge t+2$, is equivalent to
\(w\ge 2t(t+1)+2\).
They also determined the asymptotic maximum size of LPECC codes
under the same condition~\cite[Corollary~5]{2025}.

These results suggest two questions. First, does the CPECC upper bound remain valid for weights substantially below the quadratic-order threshold in $t$? Extending its range of validity would allow the optimality of constructions to be established for additional parameters. Second, can the finite-length LPECC bound be strengthened while retaining a matching asymptotic construction under a weaker weight condition?
The distinction is important: the first question concerns the range of validity of an existing bound, whereas the second concerns an improvement to the bound itself. Moreover, the LPECC question requires a separate argument, since information classes may contain transition vectors of different weights.

In this paper, we address these questions for binary CPECC and LPECC codes with $e=w-3$. To state our results, define
\[w_0(t)=\begin{cases}
5, & t=1,\\
7, & t=2,\\
10, & t=3,\\
\left\lceil\dfrac{3+\sqrt{1+8t^2(t+1)}}{2}\right\rceil, & t\ge4,
\end{cases}\]
and let $\mu(t)=w_0(t)$ for $t\ne3$, with $\mu(3)=11$. Our main results are as follows.

\begin{enumerate}
\item[(i)] \textbf{CPECC codes: an improved parameter range and exact and asymptotic constructions.}
For $t\ge1$, $w\ge w_0(t)$, and $n\ge w+t$, we prove
\[C'(n,t,w,w-3)\le\left\lfloor
\frac{\binom{n}{3}}{\binom{w+t}{3}}\right\rfloor.\]
Thus, the upper bound of Zhao and Zhang remains valid under the weaker weight condition $w\ge w_0(t)$.
Using Steiner systems and asymptotically optimal \(3\)-packings, we obtain two families of CPECC codes. Steiner systems $S(3,w+t,n)$ yield codes attaining the bound exactly, while asymptotically optimal $3$-$(n,w+t,1)$ packings yield codes attaining it asymptotically. In particular, for fixed $t$ and $w\ge w_0(t)$,
\[C'(n,t,w,w-3)=(1+o(1))
\frac{\binom{n}{3}}{\binom{w+t}{3}} \quad (n\to\infty). \]

\item[(ii)] \textbf{LPECC codes: a stronger upper bound and an asymptotically optimal construction.}
For $t\ge1$, $w\ge\mu(t)$, and $n\ge w+t$, we establish the new upper bound
\[C(n,t,w,w-3)\le\left\lfloor
\frac{\binom{n+2}{3}}{\binom{w+t}{3}}\right\rfloor.\]
This bound is strictly smaller than the previous bound of Zhao and Zhang whenever both apply, and it holds under a weaker sufficient condition on $w$. The CPECC constructions in (i) are also LPECC codes, and their asymptotic optimality implies that the above bound is asymptotically tight, since \(\binom{n+2}{3}/\binom{n}{3}\to1\) as \(n\to\infty\). Consequently, for fixed $t$ and $w\ge\mu(t)$,
\[C(n,t,w,w-3) = (1+o(1)) \frac{\binom{n}{3}}{\binom{w+t}{3}}
\quad (n\to\infty).\]
\end{enumerate}

Both sufficient weight thresholds satisfy
\[w_0(t)\sim\mu(t)\sim\sqrt{2}\,t^{3/2} \quad (t\to\infty).\]
Thus, for both CPECC and LPECC codes, the previous quadratic-order weight requirement is reduced to order $t^{3/2}$. The CPECC result extends the range in which exact or asymptotic optimality can be established, while the LPECC result additionally improves the finite-length upper bound.
The proofs rely on three-element shadows of set families. In the set-theoretic representation, Property $A(t)$ requires each information class to have transversal number at least $t+1$, while the distance condition for $e=w-3$ forces distinct information classes to have disjoint $3$-shadows. For CPECC codes, we establish the sharp estimate $|\partial_3\mathcal H|\ge\binom{w+t}{3}$ for every nonempty $w$-uniform family $\mathcal H$ with $\tau(\mathcal H)\ge t+1$, provided that $w\ge w_0(t)$. For LPECC codes, we combine two-point lifting and compression with a separate treatment of smaller-weight vectors to obtain the corresponding estimates for lifted mixed-weight families.

The remainder of this paper is organized as follows. Section~\ref{sec:preliminaries} introduces the necessary preliminaries. Section~\ref{sec:CPECC} establishes the uniform $3$-shadow bound and presents the CPECC bounds and constructions. Section~\ref{sec:LPECC-w3} develops the mixed-weight shadow bound and derives the LPECC results. Section~\ref{sec:conclusion} concludes the paper.

\section{Preliminaries}\label{sec:preliminaries}

For a positive integer \(n\), let \([n]=\{1,\ldots,n\}\). For a finite set \(X\), let \(\binom{X}{r}\) denote the family of all \(r\)-subsets of \(X\), and let \(2^X\) denote its power set. We use the convention \(\binom{a}{r}=0\) for \(0\le a<r\). 
A binary vector \(\boldsymbol{x}\) is identified with its support \(\operatorname{Supp}(\boldsymbol{x})=\{i:x_i=1\}\). Its Hamming weight is the size of its support. If \(A\) and \(B\) are the supports of two binary vectors, then their Hamming distance is \(d_H(A,B)=|A\triangle B|=|A|+|B|-2|A\cap B|\).

A family \(\mathcal F\subseteq 2^X\) is called \(k\)-uniform if \(\mathcal F\subseteq\binom{X}{k}\). A \emph{transversal} of \(\mathcal F\) is a set \(T\subseteq X\) such that \(T\cap F\ne\varnothing\) for every \(F\in\mathcal F\). The \emph{transversal number} \(\tau(\mathcal F)\) is the minimum cardinality of a transversal, with \(\tau(\mathcal F)=\infty\) if 
\(\varnothing\in\mathcal F\).
For a nonnegative integer \(r\), the \emph{\(r\)-shadow} of \(\mathcal F\) is 
\(\partial_r\mathcal F=\bigcup_{F\in\mathcal F}\binom{F}{r}\).
Thus, \(\partial_3\mathcal F\) is the family of triples contained in the blocks of \(\mathcal F\). 
If \(J\cap X=\varnothing\), define the family \(J*\mathcal F=\{J\cup F:F\in\mathcal F\}\). For nonempty \(\mathcal F\) and 
\(J\ne\varnothing\), we have \(\tau(J*\mathcal F)=1\). Therefore, transversal conditions are imposed on \(\mathcal F\).

In~\cite{2024}, the following combinatorial characterization of an \((n,t,w, e)\)-LPECC code is given.

\begin{definition}\label{def:code}
Let \(n,t,w,e\) be integers with \(t\ge1\), \(0\le e<w\), and \(n\ge w+t\), and let \(X\) be an \(n\)-element set.
An \((n,t,w,e)\)-LPECC code of size \(M\) consists of pairwise disjoint nonempty information classes 
\(\mathcal P_1,\ldots,\mathcal P_M\subseteq 2^X\) satisfying the following conditions:

\begin{itemize}
     \item \textup{\(A(t)\):} For every \(T\in\binom Xt\) and \(i\in[M]\), there exists \(B\in\mathcal P_i\) such that
           \(B\cap T=\varnothing\).
     \item \textup{\(B(w)\):} Every \(B\in\mathcal P_i\), \(i\in[M]\), satisfies \(|B|\le w\).
     \item \textup{\(C(e)\):} For distinct \(i,j\in[M]\), if \(A\in\mathcal P_i\) and \(B\in\mathcal P_j\), then 
           \(|A\triangle  B|\ge 2e+1\).
\end{itemize}
If every block has size exactly \(w\), the code is called an \((n,t,w,e)\)-CPECC code.
\end{definition}

Property \(A(t)\) holds if and only if no \(t\)-subset of \(X\) is a transversal of \(\mathcal P_i\). Hence, \[\tau(\mathcal P_i)\ge t+1 \quad\text{for every }i\in[M].\]
The maximum sizes of LPECC and CPECC codes are denoted by
\(C(n,t,w,e)\) and \(C'(n,t,w,e)\), respectively. Clearly, every CPECC code is an LPECC code, and hence \(C'(n,t,w,e)\le C(n,t,w,e)\). For \(e=w-3\), the distance condition yields the following structural restrictions.

\begin{lemma}\label{lem:lpecc-structural-w3}
Suppose that an \((n,t,w,w-3)\)-LPECC code has at least two information classes. Then the following hold.
\begin{enumerate}
    \item [\normalfont(i).] If \(A\) and \(B\) belong to distinct information classes, then
        \begin{equation}\label{eq:deficit}
              (w-|A|)+(w-|B|)+2|A\cap B|\le5.
        \end{equation}
    \item [\normalfont(ii).] Every block \(A\) satisfies \(w-5\le |A|\le w\).
    \item [\normalfont(iii).] The \(3\)-shadows of distinct information classes are disjoint.
    \item [\normalfont(iv).] At most one information class contains a block of size at most \(w-3\). Every other information class      contains only blocks of sizes \(w\), \(w-1\), and \(w-2\).
\end{enumerate}
\end{lemma}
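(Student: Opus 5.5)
Part (i) is a direct rewriting of the distance condition \(C(e)\) with \(e=w-3\). For \(A\) and \(B\) in distinct classes we have \(|A\triangle B|=|A|+|B|-2|A\cap B|\ge 2w-5\). Writing \(|A|=w-(w-|A|)\) and \(|B|=w-(w-|B|)\), this becomes \((w-|A|)+(w-|B|)+2|A\cap B|\le5\), which is \eqref{eq:deficit}. The remaining parts follow from \eqref{eq:deficit}, using that every deficit \(w-|A|\) is nonnegative by \(B(w)\).

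For (ii), the hypothesis that there are at least two classes is what we use. Given a block \(A\) in \(\mathcal P_i\), pick any \(j\ne i\) and any block \(B\in\mathcal P_j\), which exists since classes are nonempty. Both \(w-|B|\ge0\) and \(|A\cap B|\ge0\) hold, so \eqref{eq:deficit} gives \(w-|A|\le5\). Together with \(|A|\le w\) from \(B(w)\), this yields \(w-5\le|A|\le w\).

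For (iii), suppose a triple \(S\) lies in \(\binom{A}{3}\cap\binom{B}{3}\) with \(A\in\mathcal P_i\), \(B\in\mathcal P_j\), and \(i\ne j\). Then \(|A\cap B|\ge3\), so the left side of \eqref{eq:deficit} is at least \(6>5\), a contradiction. Hence the \(3\)-shadows of distinct classes are disjoint.

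For (iv), suppose two distinct classes contain blocks \(A\) and \(B\) with \(|A|,|B|\le w-3\). Then both deficits are at least \(3\), and the left side of \eqref{eq:deficit} is at least \(6\), a contradiction. So at most one class contains a block of size at most \(w-3\). Every other class therefore has all its blocks of size at least \(w-2\), and by \(B(w)\) at most \(w\); that is, only sizes \(w\), \(w-1\), and \(w-2\) occur.

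No step here is a real obstacle. The only point needing care is (ii), where the hypothesis of at least two nonempty classes is what supplies the block \(B\) to compare against.
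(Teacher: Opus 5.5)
Your proposal is correct and follows the paper's proof essentially line for line: you rewrite $C(w-3)$ as \eqref{eq:deficit}, use nonnegativity of the deficits from $B(w)$ together with a block from a second class to get (ii), and derive (iii) and (iv) because the left side of \eqref{eq:deficit} would be at least $6$. Your explicit remark that the remaining classes therefore contain only blocks of sizes $w$, $w-1$, $w-2$ is left implicit in the paper.
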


\begin{proof}
For blocks \(A\) and \(B\) from distinct information classes, Property \(C(w-3)\) gives
\[|A\triangle B|=|A|+|B|-2|A\cap B|\ge2w-5,\] which is equivalent to~\eqref{eq:deficit}.

For any block \(A\), choose a block \(B\) from a different information class. By Property \(B(w)\), all terms in \eqref{eq:deficit} are nonnegative, and hence \(w-|A|\le5\). Therefore, \(w-5\le |A|\le w\).

If two blocks from distinct information classes contained a common triple, then \(|A\cap B|\ge3\), contradicting \eqref{eq:deficit}. Hence their \(3\)-shadows are disjoint.

Finally, if blocks \(A\) and \(B\) from distinct information classes both had size at most \(w-3\), then \((w-|A|)+(w-|B|)\ge6,\)
contradicting~\eqref{eq:deficit}. Hence at most one information class contains a block of size at most \(w-3\).
\end{proof}

An information class whose blocks all have sizes in \(\{w,w-1,w-2\}\) is called \emph{normal}. If an information class contains a block of size at most \(w-3\), it is called \emph{exceptional}.

Let \(X\) be a \(v\)-element set. An \(r\)-\((v,k,1)\) packing is a family \(\mathcal S\subseteq\binom{X}{k}\) such that every \(r\)-subset of \(X\) is contained in at most one block of \(\mathcal S\). In particular, for \(r=3\), this is equivalent to requiring that any two distinct blocks intersect in at most two points.

\begin{theorem}[R\"odl~\cite{Rodl1985}]\label{thm:packing-existence}
For fixed integers \(k>r\ge2\), there exists an \(r\)-\((v,k,1)\) packing with
\[(1-o(1))\frac{\binom vr}{\binom kr}\] blocks as \(v\to\infty\).
\end{theorem}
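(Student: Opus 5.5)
The plan is to deduce the statement from a near-perfect matching theorem for almost regular hypergraphs with small codegrees, namely the Frankl--R\"odl / Pippenger--Spencer theorem, and, to make the argument self-contained, to sketch how that theorem follows from the R\"odl nibble. Fix \(k>r\ge2\). Let \(N=\binom vr\) and \(s=\binom kr\).

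\emph{Step 1 (reformulation).} Define an auxiliary \(s\)-uniform hypergraph \(\mathcal H\).
\begin{itemize}
\item Its vertex set is \(\binom Xr\), so it has \(N\) vertices.
\item Each \(K\in\binom Xk\) gives one edge, namely \(\binom Kr\).
\end{itemize}
Two \(k\)-sets share an \(r\)-subset exactly when their edges intersect. Hence a matching in \(\mathcal H\) is precisely an \(r\)-\((v,k,1)\) packing. A matching covering all but \(o(N)\) vertices has \((1-o(1))N/s\) edges, which is exactly the claimed bound.

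\emph{Step 2 (regularity and codegrees).} Every vertex of \(\mathcal H\) has degree exactly \(D=\binom{v-r}{k-r}\). Two distinct \(r\)-sets together span at least \(r+1\) points, so they lie in at most \(\binom{v-r-1}{k-r-1}\) common \(k\)-sets. This codegree is \(O(D/v)=o(D)\) because \(k\) is fixed. Thus \(\mathcal H\) is \(D\)-regular with codegree \(o(D)\), and Pippenger--Spencer yields the required matching.

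\emph{Step 3 (the nibble).} Choose a small \(\varepsilon>0\) and perform rounds. In each round, select every edge of the current hypergraph independently with probability \(\varepsilon/D\). Keep the selected edges that meet no other selected edge; they are added to the packing. Then delete all vertices covered by selected edges.

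In expectation a vertex survives a round with probability about \(e^{-\varepsilon}\). For a surviving vertex \(x\), an edge through \(x\) survives only if its other \(s-1\) vertices all survive. The small codegree makes these survival events nearly independent. Consequently the degree of \(x\) becomes about \(De^{-\varepsilon(s-1)}\), and codegrees shrink at least as fast.

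Concentration of these quantities comes from a bounded-difference martingale inequality, either Azuma or Talagrand. This keeps the hypergraph \((1\pm\delta)\)-regular after each round, with error \(\delta\) growing only slowly. After \(T\approx\varepsilon^{-1}\log(1/\varepsilon)\) rounds, the uncovered vertices number at most roughly \(e^{-\varepsilon T}N+O(\varepsilon)N\). This includes the losses from discarding overlapping selected edges, which cost \(O(\varepsilon)\) per round relative to progress. Letting \(\varepsilon\to0\) slowly as \(v\to\infty\) gives \(o(N)\) uncovered vertices.

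\emph{Main obstacle.} The delicate part is Step 3: controlling how the regularity error accumulates over many rounds. The concentration bounds need degrees large compared with \(\log N\) and codegrees \(o(D)\), and these must stay true throughout the process. I would handle this by tracking the ratio (maximum codegree)/(degree) explicitly. I would also stop once the degree drops to a slowly growing function, at which point the leftover is already \(o(N)\).
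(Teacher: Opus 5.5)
Your argument is correct, but it is not what the paper does: the paper gives no proof of this statement and simply imports R\"odl's theorem as a black box. You instead derive it from the Frankl--R\"odl/Pippenger--Spencer near-perfect matching theorem, and Steps~1 and~2 are accurate:
\begin{itemize}
\item matchings in the $\binom{k}{r}$-uniform hypergraph on $\binom{X}{r}$ are exactly $r$-$(v,k,1)$ packings;
\item every vertex has degree $D=\binom{v-r}{k-r}$, which tends to infinity since $k>r$;
\item every codegree is at most $\binom{v-r-1}{k-r-1}=\frac{k-r}{v-r}D=o(D)$.
\end{itemize}
Together with that theorem this is a complete proof. What your route buys is transparency: it isolates the only two properties the nibble needs, and it shows that the $o(1)$ term comes from the codegree-to-degree ratio $O(1/v)$. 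What it costs is that you invoke a strictly stronger theorem, proved by the same nibble. Logically, it is therefore no more self-contained than citing R\"odl directly.

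If you want Step~3 to stand on its own, two points need repair.

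First, codegrees do not shrink at least as fast as degrees. In expectation they drop by a factor $e^{-\varepsilon(s-2)}$ per round, against $e^{-\varepsilon(s-1)}$ for degrees. This is harmless, because codegrees never increase and degrees stay at least of order $\varepsilon^{s-1}D$ over the $T$ rounds. But the reason you state is wrong.

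Second, and more seriously, the concentration step does not work as stated. A single selected edge $\binom{K}{r}$ with $|K\cap x|=r-1$ destroys at least $\binom{v-r-1}{k-r-1}=\Theta(D/v)$ edges through $x$. So the Lipschitz constant is $\Theta(D/v)$. Azuma, or Talagrand in its usual form, then gives tail exponents of order at most $\delta^2v^2/D\asymp\delta^2v^{2-(k-r)}$. Once $k-r\ge2$, this is too small for a union bound over the $N\asymp v^r$ vertices, and the paper needs $r=3$ and $k=w+t\ge6$.

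The standard remedy (as in Alon--Spencer) does not try to keep every vertex's degree within $(1\pm\delta)D_i$. Instead:
\begin{itemize}
\item bound the variance of the new degree of $x$ by $o(D^2)$, where $C$ denotes the maximum codegree: pairs of edges through $x$ sharing a further vertex number $O(CD)$, and the remaining pairs have covariance $O(\varepsilon C/D)$;
\item use Chebyshev and Markov to show that all but $o(N)$ vertices have degree $(1\pm\delta)De^{-\varepsilon(s-1)}$;
\item discard the remaining vertices.
\end{itemize}
This needs only codegree $o(D)$, with no $\log N$ hypothesis.
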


If every \(r\)-subset of \(X\) is contained in exactly one block of \(\mathcal S\), then \((X,\mathcal S)\) is an \(r\)-\((v,k,1)\) design, or equivalently a Steiner system \(S(r,k,v)\). In particular, an \(S(3,k,v)\) satisfies
\[|\mathcal S|=\frac{\binom v3}{\binom k3},\] and any two distinct blocks intersect in at most two points.

\begin{theorem}[Keevash~\cite{Keevash2014}]\label{thm:Keevash-design}
For fixed integers \(k>r\ge2\), if \(v\) is sufficiently large and
\[\binom{k-i}{r-i}\Big|\binom{v-i}{r-i},\quad 0\le i\le r-1,\]
then an \(S(r,k,v)\) exists.
\end{theorem}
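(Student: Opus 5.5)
This statement is imported from the literature and the paper uses it as a black box, so my plan here is to outline the proof strategy of Keevash~\cite{Keevash2014}, not to reprove it. The first step is to observe that the divisibility conditions \(\binom{k-i}{r-i}\mid\binom{v-i}{r-i}\) for \(0\le i\le r-1\) are necessary. Fix any \(i\)-set \(I\); the blocks of an \(S(r,k,v)\) containing \(I\) partition the \(r\)-sets containing \(I\), of which there are \(\binom{v-i}{r-i}\). Each such block covers exactly \(\binom{k-i}{r-i}\) of them. The task is therefore sufficiency for large \(v\). Equivalently, the complete \(r\)-graph \(K^{(r)}_v\) must admit a perfect decomposition into copies of \(K^{(r)}_k\).

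The approach is \emph{randomised algebraic construction} combined with absorption, in three stages.

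\textbf{Stage 1 (template).} Identify \(X\) with a subset of a finite field \(\mathbb F_{2^a}\) and use an algebraic template. This is a random family of \(k\)-sets defined by the solution sets of random linear equations over the field. It is built so that any two of its blocks share fewer than \(r\) points, and it covers a small but well-distributed fraction of the \(r\)-sets. The template has rich local structure: many ways to "switch" blocks in and out, which is what makes absorption possible.

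\textbf{Stage 2 (nibble).} Apply the R\"odl nibble, in the spirit of Theorem~\ref{thm:packing-existence}, to the \(r\)-sets not covered by the template. This yields a packing whose leftover \(r\)-graph \(L\) is very sparse and pseudorandom; an analysis of the random greedy or semi-random process gives the required regularity.

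\textbf{Stage 3 (absorbing the leftover).} This is the step I expect to be the main obstacle. First, \emph{integral} decomposition results show that \(L\), possibly modified using the template, satisfies the divisibility conditions inherited from \(v\) and can be written as a signed integer combination of \(K^{(r)}_k\)'s. This is a lattice computation of the kind studied by Graver--Jurkat and Wilson. Each negative term is then cancelled by local "switchers" inside the template. A switcher is a configuration whose \(r\)-sets admit two different \(K^{(r)}_k\)-decompositions, one of which is already used in the template. Swapping one decomposition for the other lets the template absorb the leftover exactly, producing a genuine \(S(r,k,v)\).

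The hard part is showing that enough disjoint switchers exist, with controlled overlaps, to absorb every leftover configuration simultaneously. This is where the pseudorandomness of both the template and the nibble output, together with \(v\) being large, is indispensable. For the purposes of this paper, the theorem is used only to supply \(S(3,w+t,n)\) for fixed \(t,w\) and infinitely many \(n\). So we simply invoke~\cite{Keevash2014} rather than reproduce this argument.
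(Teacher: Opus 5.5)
Your proposal matches the paper's treatment: the paper gives no proof of this theorem and simply cites Keevash, using it as a black box only to supply an $S(3,w+t,n)$ in Corollary~\ref{cor:CPECC-large-n}, exactly as you propose. Your argument that the divisibility conditions are necessary is correct, and your three-stage outline (random algebraic template, nibble, then integral decomposition with absorption) is a reasonable informal summary of Keevash's method, though it is a sketch rather than a proof.
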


\section{Uniform 3-shadow bounds and CPECC codes}\label{sec:CPECC}

In this section, we establish a uniform \(3\)-shadow bound for set families with large transversal number and apply it to CPECC codes with \(e=w-3\).
As a consequence, we extend the range of validity of the upper bound for CPECC codes obtained by Zhao and Zhang~\cite{2025} from the previous quadratic-order condition on \(w\) to a weaker condition.

For an \((n,t,w,w-3)\)-CPECC code, the distance requirement implies that any two blocks from distinct information classes intersect in at most two points. Hence, the \(3\)-shadows of distinct information classes are pairwise disjoint. Therefore, bounding the number of information classes reduces to finding a lower bound on the \(3\)-shadow size of a \(w\)-uniform family with transversal number at least \(t+1\).
We first prove a more general \(3\)-shadow estimate involving a fixed common set. This estimate will also be used later in the mixed-weight setting for LPECC codes in Section~\ref{sec:LPECC-w3}. We then derive the uniform \(3\)-shadow bound and apply it to obtain the improved CPECC upper bound and the corresponding optimal and asymptotically optimal constructions.

\subsection{The uniform 3-shadow bound}

\begin{lemma}\label{lem:common-point-shadow}
Let \(t\ge1\), \(m\ge0\), and \(k\ge t+2\), and let \(\varnothing\ne\mathcal F\subseteq\binom Vk\) satisfy \(\tau(\mathcal F)\ge t+1\). Let \(J\) be an \(m\)-set disjoint from \(V\), set \(s=k+m\), and fix \(W\in\mathcal F\).

  \begin{enumerate}
     \item[\normalfont(i).] If \(|W\setminus A|\le t\) for every \(A\in\mathcal F\), then
         \[|\partial_3(J*\mathcal F)|\ge\binom{s+t}{3}.\]

     \item[\normalfont(ii).] If there exists \(A\in\mathcal F\) such that \(|W\setminus A|\ge t+1\) and \(|W\cap A|\ge t,\) then
         \[|\partial_3(J*\mathcal F)|\ge\binom{s+t}{3}+\frac{(s-1)(s-2)-2t^2(t+1)}{2}.\]

     \item[\normalfont(iii).] If \(t\ge2\) and there exists \(A\in\mathcal F\) such that \( |W\setminus A|\ge t+1\) and 
         \(|W\cap A|<t,\) then
         \[|\partial_3(J*\mathcal F)|\ge3\binom{s}{3}-\binom{s-1}{3}-\binom{t+m-1}{3}-\binom{m+1}{3}+\binom{m}{3}.\]

     \item[\normalfont(iv).] If \(t=1\) and there exists \(A\in\mathcal F\) disjoint from \(W\), then
         \[|\partial_3(J*\mathcal F)|\ge2\binom{s}{3}-\binom{m}{3}.\]
\end{enumerate}
\end{lemma}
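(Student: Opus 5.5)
Throughout, write \(W'=J\cup W\), an \(s\)-set, and for a block \(A\in\mathcal F\) write \(A'=J\cup A\). The plan is to exhibit explicit triples in \(\partial_3(J*\mathcal F)\) and count them. There are two sources: all \(\binom{s}{3}\) triples inside \(W'\), and triples meeting points outside \(W\) coming from other blocks. The transversal condition is what forces such outside points to exist. Since \(\tau(\mathcal F)\ge t+1\), for every \(t\)-subset \(T\subseteq W\) some block \(A\in\mathcal F\) avoids \(T\). In part (i) every block satisfies \(|W\setminus A|\le t\), so \(|A\setminus W|=|W\setminus A|\).

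\textbf{Part (i).} I would grow a chain of sets. Start with \(U_0=W\). At each step, pick a \(t\)-set \(T\) inside the part of \(W\) "already displaced", together with further points of \(W\), so that a block \(A\) avoiding \(T\) supplies a new outside point \(x\notin U_i\). Record the triples of \(A'\) that contain \(x\) and are not yet counted. The goal is a "Kruskal--Katona style" count of the form
\[
\binom{s}{3}+\binom{s}{2}+\binom{s+1}{2}+\cdots+\binom{s+t-1}{2}=\binom{s+t}{3},
\]
one term per new point. The new point \(x_j\) should contribute at least \(\binom{s+j-1}{2}\) new triples. It is automatic that \(x_j\) contributes \(\binom{|A'|-1}{2}=\binom{s-1}{2}\) triples containing \(x_j\). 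This alone is too weak, so the intended argument has to be different. I would instead argue that the union \(U\) of all blocks has at least \(k+t\) points: \(W\) has \(k\) points, and each block avoiding a \(t\)-set of \(W\) adds points. Then I would bound \(|\partial_3|\) from below using the fact that every point lies in a block, and that pairs are covered. The cleanest route is probably induction on \(t\). Apply the case \(t-1\) to the restriction of \(\mathcal F\) to blocks avoiding some fixed point \(p\in W\); the link and deletion of \(p\) then give a recursion \(\binom{s+t}{3}=\binom{s+t-1}{3}+\binom{s+t-1}{2}\). The second term would come from triples through a new point, using that these blocks cover many pairs. This is the step I expect to be the main obstacle, and I would try to make it work first.

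\textbf{Parts (ii)--(iv).} These are explicit two-block or three-block counts. For (ii), fix \(A\) with \(|W\setminus A|=a\ge t+1\) and \(|W\cap A|=k-a\ge t\). Inclusion--exclusion gives
\[
|\partial_3(\{W',A'\})|=2\binom{s}{3}-\binom{s-a}{3}.
\]
I would then compare this with \(\binom{s+t}{3}\), minimizing over the allowed range of \(a\); the extreme case is \(a=k-t\). Using convexity of \(\binom{\cdot}{3}\), the difference reduces to the quadratic expression \(\tfrac{(s-1)(s-2)-2t^2(t+1)}2\). Possibly a third block is needed, avoiding a \(t\)-set inside \(W\cap A\), to reach exactly this constant.

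For (iii), with \(|W\cap A|\le t-1\) I would choose a \(t\)-set \(T\supseteq W\cap A\) containing one further point. A third block \(B\) avoiding \(T\) then exists. Counting the triples of \(W',A',B'\) by inclusion--exclusion, and bounding the pairwise intersections by \(s-1\), \(t+m-1\) and \(m+1\), should give the stated expression with \(\binom{m}{3}\) as the triple-overlap correction. Part (iv) is the two-block count with \(W'\cap A'=J\), which gives
\[
2\binom{s}{3}-\binom{m}{3}.
\]
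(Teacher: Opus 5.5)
Your proposal has genuine gaps in (i) and (ii). Only (iv) is actually carried out, and (iii) has the right shape but relies on a false intermediate bound.

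\textbf{Part (i).} You do not reach an argument. The chain count is too weak, as you note yourself. The induction on \(t\) also does not work as sketched: the hypothesis of (i) refers to the fixed block \(W\), which is not in \(\{A\in\mathcal F:p\notin A\}\) when \(p\in W\), and nothing controls the triples through \(p\).

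The missing idea is that hypothesis (i) pins down the avoiding blocks exactly. For each \(P\in\binom Wt\), a block \(B_P\) with \(B_P\cap P=\varnothing\) satisfies \(P\subseteq W\setminus B_P\) and \(|W\setminus B_P|\le t\). Hence \(B_P=(W\setminus P)\cup E_P\) with \(E_P\in\binom{V\setminus W}{t}\).

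Now double count the triples of \(J\cup B_P\) not contained in \(J\cup W\). Sort them by the numbers \((a,b,c)\), with \(a\ge1\), of their points in \(V\setminus W\), \(J\), \(W\). Summed over \(P\), there are \(\binom kt\binom ta\binom mb\binom{k-t}{c}\) incidences. A fixed triple occurs for at most \(\binom{k-c}{t}\) choices of \(P\), since \(P\) must avoid its \(c\) points in \(W\). This gives at least \(\binom ta\binom mb\binom kc\) distinct triples of each type. Adding the \(\binom s3\) triples of \(J\cup W\), Vandermonde's identity yields \(\binom{s+t}3\).

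\textbf{Part (ii).} Two blocks cannot give this bound. The count \(2\binom s3-\binom{s-a}3\) is increasing in \(a=|W\setminus A|\), so its worst case is \(a=t+1\), not \(a=k-t\). There its excess over \(\binom{s+t}3\) is \(\tfrac12\bigl(s^2-(2t^2+2t+3)s+3t^2+3t+2\bigr)\), whose nonnegativity is exactly the old quadratic condition of Zhao and Zhang. For example, \(t=4\), \(s=k=15\) gives only \(790<\binom{19}{3}\).

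The third block is essential, and the substantive step is the optimization you leave out:
\begin{itemize}
\item Take \(T\in\binom{W\cap A}{t}\) and \(B\) avoiding \(T\). Set \(r=|W\cap A|\), \(q=k-r\), \(\ell=|B\cap W\cap A|\le r-t\), \(u=|B\cap(W\setminus A)|\), \(v=|B\cap(A\setminus W)|\). Then \(u,v\le q\) and \(u+v\le k-\ell\).
\item Convexity of \(g(x)=\binom{x+m}3\) gives \(g(\ell+u)+g(\ell+v)-g(\ell)\le g(k-t)+g(r)-g(r-t)\).
\item Inclusion--exclusion then gives \(|\partial_3(J*\mathcal F)|\ge 3g(k)-g(k-t)-2g(r)+g(r-t)\).
\item This is nonincreasing in \(r\), so it is minimized at \(r=k-t-1\). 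A polynomial identity turns it into the stated expression.
\end{itemize}

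\textbf{Part (iii).} The structure is right, but bounding the pairwise intersections individually by \(s-1\), \(t+m-1\), \(m+1\) is false: \(|A'\cap B'|\) may be as large as \(s-1\). You need \(T\) to contain \(W\cap A\) \emph{and} meet both \(W\) and \(A\). When \(W\cap A=\varnothing\), this needs two added points, one from each block, which is where \(t\ge2\) enters; ``one further point'' is not enough. With that choice, \(W\cap B\) and \(A\cap B\) are disjoint, \(u,v\le k-1\), and \(u+v\le k\). Convexity then gives \(g(u)+g(v)\le g(k-1)+g(1)\), and together with \(g(r)\le g(t-1)\) this yields (iii).

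\textbf{Part (iv).} This is correct as written.
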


\begin{proof}
Fix \(W\in\mathcal F\). We distinguish two cases according to the intersection size between \(W\) and a block \(A\in\mathcal F\) avoiding a \(t\)-subset of \(W\).
For (i), fix \(P\in\binom Wt\). Since \(\tau(\mathcal F)\ge t+1\), there exists \(B_P\in\mathcal F\) with 
\(B_P\cap P=\varnothing\). By the assumption in (i), we have \(P\subseteq W\setminus B_P\) and \(|W\setminus B_P|\le t\), which implies \(W\setminus B_P=P\). Since \(|B_P|=|W|=k\), it follows that \(B_P=(W\setminus P)\cup E_P\) for some
\(E_P\in\binom{V\setminus W}{t}\).

We count triples contained in the sets \(B_P\cup J\). For a triple contained in \(B_P\cup J\) but not in \(W\cup J\), let \(a\), \(b\), and \(c\) denote the numbers of its points in \(E_P\), \(J\), and \(W\setminus P\), respectively. Then \(a\ge1\) and \(a+b+c=3\).
For fixed \((a,b,c)\), summing over all \(P\in\binom Wt\), the total number of such triples counted with multiplicity is
\[\binom kt\binom ta\binom mb\binom{k-t}{c}.\]
On the other hand, a fixed triple can be counted at most \(\binom{k-c}{t}\) times. Indeed, \(P\) must avoid its \(c\) points in \(W\), so \(P\) is chosen from the remaining \(k-c\) points of \(W\). Hence, the number of distinct triples of type \((a,b,c)\) is at least
\[\frac{\binom kt\binom ta\binom mb\binom{k-t}{c}}{\binom{k-c}{t}}=\binom ta\binom mb\binom kc.\]
Adding the \(\binom{s}{3}\) triples contained in \(W\cup J\), we obtain
\[
   \begin{aligned}
       |\partial_3(J*\mathcal F)|
      &\ge\sum_{a+b+c=3}\binom ta\binom mb\binom kc=\binom{s+t}{3},
   \end{aligned}
\]
by Vandermonde's identity. This proves (i).

For (ii)--(iv), let \(g_m(x)=\binom{x+m}{3}\) for nonnegative integers \(x\). Since \(g_m(x+1)-g_m(x)=\binom{x+m}{2}\) is nondecreasing, \(g_m\) is discretely convex. In particular, for \(1\le x\le y\),
\[g_m(x-1)+g_m(y+1)\ge g_m(x)+g_m(y).\] Thus, for fixed \(x+y\), the sum \(g_m(x)+g_m(y)\) is maximized when the two arguments are as unequal as the constraints permit.

We now prove (ii). Let \(q=|W\setminus A|\) and \(r=|W\cap A|\). Since \(W\) and \(A\) are both \(k\)-sets, we have
\(q+r=k\) and \(|A\setminus W|=q\). As \(r\ge t\), choose \(T\in\binom{W\cap A}{t}\). By \(\tau(\mathcal F)\ge t+1\), there exists \(B\in\mathcal F\) such that \(B\cap T=\varnothing\).
Let \(\ell=|B\cap W\cap A|\), \(u=|B\cap(W\setminus A)|\), and \(v=|B\cap(A\setminus W)|\).
Then \(\ell\le r-t\). Since these parts are pairwise disjoint and \(|B|=k\), we have \(0\le u,v\le q\) and \(u+v\le k-\ell=q+r-\ell\).
For fixed \(\ell\), the previous inequality shows that \(g_m(\ell+u)+g_m(\ell+v)\) is maximized when \(u+v\) is as large as possible and \(u,v\) are as unequal as the constraints permit. Hence,
     \begin{equation}\label{eq:xxx}
            g_m(\ell+u)+g_m(\ell+v)-g_m(\ell) \le
        \begin{cases}
            2g_m(\ell+q)-g_m(\ell),
            &0\le \ell\le r-q,\\[1mm] g_m(\ell+q)+g_m(r)-g_m(\ell),
            &r-q<\ell\le r-t.
        \end{cases}
     \end{equation}
The first case can occur only when \(r\ge q\), since otherwise \(r-q<0\) and the corresponding range of \(\ell\) is empty. If 
\(\ell\le r-q\), then \(q+r-\ell\ge2q,\)
so the maximum is attained at \(u=v=q\). Otherwise, \(q+r-\ell<2q,\) and the maximum is attained when \(\{u,v\}=\{q,r-\ell\}\). The successive differences of the two expressions in \eqref{eq:xxx} are
\[2\binom{\ell+q+m}{2}-\binom{\ell+m}{2}~~{\rm and}~~\binom{\ell+q+m}{2}-\binom{\ell+m}{2},\] respectively, both of which are nonnegative. Hence, both expressions are increasing in \(\ell\), and the maximum over \(0\le\ell\le r-t\) is attained at \(\ell=r-t\). Therefore,
     \begin{equation}\label{eq:xxx2}
            g_m(\ell+u)+g_m(\ell+v)-g_m(\ell)\le g_m(k-t)+g_m(r)-g_m(r-t).
     \end{equation}
By the definitions of \(r,\ell,u,v\), we have
\[\begin{aligned}
|(J\cup W)\cap(J\cup A)|&=m+r,\\
|(J\cup W)\cap(J\cup B)|&=m+\ell+u,\\
|(J\cup A)\cap(J\cup B)|&=m+\ell+v,\\
|(J\cup W)\cap(J\cup A)\cap(J\cup B)|&=m+\ell.
\end{aligned}\]
Applying the inclusion–exclusion principle to the three families $\binom{J\cup W}{3}$, $\binom{J\cup A}{3}$, and $\binom{J\cup B}{3}$, and using \eqref{eq:xxx2}, we obtain
\[
    \begin{aligned}
          |\partial_3(J*\mathcal F)|
         &\ge 3g_m(k)-g_m(r)-g_m(\ell+u)-g_m(\ell+v)+g_m(\ell)\\
         &\ge 3g_m(k)-g_m(k-t)-2g_m(r)+g_m(r-t).
    \end{aligned}
\]
Let \(h(r)=-2g_m(r)+g_m(r-t)\) for \(r\ge t\). Since \[h(r+1)-h(r)=-2\binom{r+m}{2}+\binom{r-t+m}{2}\le 0,\]
\(h\) is nonincreasing. 
As \(t\le r=k-q\le k-t-1\), we have \(h(r)\ge h(k-t-1)\).
Consequently, 
\[
     \begin{aligned}
            |\partial_3(J*\mathcal F)|
            \ge{}& 3\binom s3-\binom{s-t}{3}-2\binom{s-t-1}{3}+\binom{s-2t-1}{3}.
     \end{aligned}
\]
Since \(k\ge2t+1\), all upper arguments in the above binomial coefficients are nonnegative. A direct expansion gives
\[
     \begin{aligned}
           &3\binom s3-\binom{s-t}{3}-2\binom{s-t-1}{3}+\binom{s-2t-1}{3}-\binom{s+t}{3}\\
           &\qquad=\frac{(s-1)(s-2)-2t^2(t+1)}{2}.
      \end{aligned}
\]
Combining this identity with the preceding lower bound completes the proof of (ii).

   For (iii), assume \(t\ge2\) and let \(r=|W\cap A|<t\). 
Choose a \(t\)-subset \(T\subseteq W\cup A\) containing \(W\cap A\) and meeting both \(W\) and \(A\). By \(\tau(\mathcal F)\ge t+1\), there exists \(B\in\mathcal F\) disjoint from \(T\).
Let \(u=|W\cap B|\) and \(v=|A\cap B|\). As \(T\) meets both \(W\) and \(A\), we have \(u,v\le k-1\).
Moreover, \(W\cap A\subseteq T\) implies \(W\cap A\cap B=\varnothing\), so \(W\cap B\) and
\(A\cap B\) are disjoint subsets of \(B\). Hence \(u+v\le k\). By discrete convexity, we have \(g_m(u)+g_m(v)\le g_m(k-1)+g_m(1)\).
Furthermore, it follows from \(r\le t-1\) that \(g_m(r)\le g_m(t-1)\).
Combining with the inclusion--exclusion principle, we obtain
\[
    \begin{aligned}
           |\partial_3(J*\mathcal F)|
          &\ge 3g_m(k)-g_m(r)-g_m(u)-g_m(v)+g_m(0)\\
          &\ge 3g_m(k)-g_m(k-1)-g_m(t-1)-g_m(1)+g_m(0)\\
          &=3\binom{s}{3}-\binom{s-1}{3}-\binom{t+m-1}{3}-\binom{m+1}{3}+\binom{m}{3}.
    \end{aligned}
\]
This proves (iii).

Finally, suppose that \(t=1\) and \(W\cap A=\varnothing\). Then \((J\cup W)\cap(J\cup A)=J\). By the inclusion--exclusion principle, we have 
\[\begin{aligned}
|\partial_3(J*\mathcal F)|
&\ge\left|\binom{J\cup W}{3}\cup\binom{J\cup A}{3}\right|\\
&=2\binom{s}{3}-\binom{m}{3}.
\end{aligned}\]
This proves (iv).
\end{proof}

For \(t=3\), the preceding lemma yields the desired uniform \(3\)-shadow bound for \(w\ge11\). The following lemma extends this bound to \(w=10\).

\begin{lemma}\label{lem:exceptional-shadow}
Let \(\mathcal H\subseteq\binom{X}{10}\) satisfy \(\tau(\mathcal H)\ge4\). If there exist \(W,A\in\mathcal H\)
such that \(|W\cap A|\le2\), then \[|\partial_3\mathcal H|\ge\binom{13}{3}.\]
\end{lemma}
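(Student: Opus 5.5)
The plan is to sharpen the inclusion--exclusion argument of Lemma~\ref{lem:common-point-shadow}(iii) in the case \(t=3\), \(m=0\), \(k=10\). There the bound is \(3\binom{10}{3}-\binom93=276\), which falls short of \(\binom{13}{3}=286\) by \(10\). Write \(g(x)=\binom x3\) and \(r=|W\cap A|\le 2\). For any \(B\in\mathcal H\), set \(u=|W\cap B|\) and \(v=|A\cap B|\). If \(W\cap A\cap B=\varnothing\), inclusion--exclusion over \(\binom W3,\binom A3,\binom B3\) gives
\[|\partial_3\mathcal H|\ge 360-g(r)-g(u)-g(v)=360-g(u)-g(v),\]
using \(g(r)=0\) for \(r\le2\). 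So it suffices to find such a \(B\) with \(g(u)+g(v)\le 74\). Since \(g(9)=84\), \(g(8)=56\), \(g(7)=35\), and \(u+v\le 10\), discrete convexity shows that \(g(u)+g(v)\le 74\) fails only when \(\{u,v\}\) has \(\max=9\). In all other cases \(g(u)+g(v)\le g(8)+g(2)=56\).

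\textbf{Case \(r\ge1\).} Choose a \(3\)-set \(T\supseteq W\cap A\) with \(|T\cap W|\ge2\) and \(|T\cap A|\ge2\). For \(r=2\), take \(W\cap A\) plus any further point. For \(r=1\), take the common point, one point of \(W\setminus A\), and one point of \(A\setminus W\). Since \(\tau(\mathcal H)\ge4\), some \(B\in\mathcal H\) avoids \(T\). Then \(W\cap A\cap B=\varnothing\), \(u\le 10-|T\cap W|\le 8\), and \(v\le 8\). Hence \(g(u)+g(v)\le 56\), and we get \(|\partial_3\mathcal H|\ge 304\).

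\textbf{Case \(r=0\).} Now \(T\) cannot meet both \(W\) and \(A\) twice, so I use two blocks.
\begin{itemize}
\item Take \(T_1\) with two points in \(W\) and one in \(A\), and let \(B\in\mathcal H\) avoid \(T_1\). Then \(u\le8\).
\item Take \(T_2\) with one point in \(W\) and two in \(A\), and let \(C\in\mathcal H\) avoid \(T_2\). Then \(|A\cap C|\le8\).
\end{itemize}
If either \(B\) (with respect to \(W,A\)) or \(C\) avoids the extremal configuration, the displayed bound already gives at least \(286\). Otherwise \(|A\cap B|=9\) and \(|W\cap B|\le1\), and likewise \(|W\cap C|=9\) and \(|A\cap C|\le1\). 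In that situation I count directly:
\begin{itemize}
\item \(\binom W3\cup\binom A3\) contributes \(240\) triples, since \(W\cap A=\varnothing\).
\item \(B\) has a unique point \(y\notin A\). The \(36\) triples consisting of \(y\) and two points of \(A\cap B\) are new: they are not in \(A\), and they are not in \(W\) because they contain points of \(A\).
\item Symmetrically, \(C\) has a unique point \(z\notin W\), and it supplies \(36\) triples of the form \(\{z\}\cup\) (two points of \(W\cap C\)). These lie in neither \(W\) nor \(A\). They are also not in \(B\), because each contains two points of \(W\) while \(|W\cap B|\le1\).
\end{itemize}
This gives \(|\partial_3\mathcal H|\ge 240+72=312\ge286\).

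The main obstacle is the \(r=0\) case. There a single auxiliary block can lose exactly the \(10\) triples needed, so the fix is to pick two complementary \(3\)-sets \(T_1,T_2\) and verify that the new triples supplied by \(B\) and \(C\) are disjoint from each other and from \(\binom W3\cup\binom A3\).
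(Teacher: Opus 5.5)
Your proof is correct and follows essentially the same route as the paper. The case $|W\cap A|\in\{1,2\}$ is handled identically, and for $W\cap A=\varnothing$ you use the same ingredients: a block meeting each of $W$ and $A$ in at most eight points adds at least $64$ new triples, while blocks meeting $A$ and $W$ in nine points each add $36$ disjoint new triples. The only difference is organizational: the paper argues by contradiction with a budget of $45$ extra triples and finishes by exhibiting the $3$-set $\{x,y,z\}$ as a transversal, whereas you apply $\tau(\mathcal H)\ge4$ directly to that set (your $T_1$) and its mirror image $T_2$, which gives the explicit bound $312$.
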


\begin{proof}
Let \(W,A\in\mathcal H\) satisfy \(r=|W\cap A|\le2\). Suppose first that \(r\in\{1,2\}\). Choose a \(3\)-subset
\(T\subseteq W\cup A\) such that \(W\cap A\subseteq T\) and \(|T\cap W|,\,|T\cap A|\ge2\).
Since \(\tau(\mathcal H)\ge4\), there exists \(B\in\mathcal H\) with \(B\cap T=\varnothing\). Let
\(u=|W\cap B|\) and\(v=|A\cap B|\).
Since \(T\) contains at least two points from each of \(W\) and \(A\), we have \(u,v\le8\). Moreover,
\(W\cap A\subseteq T\) implies \(W\cap A\cap B=\varnothing\). Thus \(W\cap B\) and \(A\cap B\) are disjoint subsets of \(B\), so
\(u+v\le |B|=10\).
By discrete convexity,
\[ 
       \binom{u}{3}+\binom{v}{3}
   \le \binom{8}{3}+\binom{2}{3}.
\]
Since \(r\le2\), we have \(\binom{r}{3}=0\). Hence, by the inclusion--exclusion principle,
\[
    \begin{aligned}
          |\partial_3\mathcal H|
         &\ge 3\binom{10}{3} -\binom{8}{3} -\binom{2}{3}\\
         &=304 >\binom{13}{3}.
    \end{aligned}
\]
It remains to consider \(r=0\), in which case \(W\cap A=\varnothing\). Assume, for a contradiction, that
\[|\partial_3\mathcal H|<\binom{13}{3}=286.\]
Since \(W\) and \(A\) are disjoint, they contribute \(2\binom{10}{3}=240\) distinct triples. Hence all other blocks together can contribute at most \(45\) additional triples.

Moreover, \(\mathcal H\setminus\{W,A\}\ne\varnothing\). Otherwise, one point from each of \(W\) and \(A\) would form a \(2\)-element transversal of \(\mathcal H\), contradicting \(\tau(\mathcal H)\ge4\).
Let \(B\in\mathcal H\setminus\{W,A\}\). Suppose that \(|B\cap W|\le8\) and \(|B\cap A|\le8\).
Since \(W\cap A=\varnothing\),
\(|B\cap W|+|B\cap A|\le |B|=10\).
By discrete convexity,
\[\binom{|B\cap W|}{3}+\binom{|B\cap A|}{3}\le\binom{8}{3}+\binom{2}{3}.\]
Thus \(B\) contributes at least \[\binom{10}{3}-\binom{8}{3}-\binom{2}{3}=64\] triples not contained in \(W\) or \(A\), contradicting the bound above.
Therefore, at least one of \(|B\cap W|\) and \(|B\cap A|\) is at least \(9\). Since \(B\ne W,A\) and all three blocks have size 
\(10\), both intersection sizes are at most \(9\). Hence
\(\max\{|B\cap W|,|B\cap A|\}=9\) for every \(B\in\mathcal H\setminus\{W,A\}\).

Suppose there exist \(B_1,B_2\in\mathcal H\setminus\{W,A\}\) with \(|B_1\cap W|=|B_2\cap A|=9\).
Each contributes \(\binom{9}{2}=36\) additional triples containing two points of \(W\) or \(A\), respectively. Since \(W\cap A=\varnothing\), these collections are disjoint, giving at least \(72\) additional triples, a contradiction.
By symmetry, we may assume that \(|B\cap W|=9\) for every \(B\in\mathcal H\setminus\{W,A\}\).
Choose distinct \(x,y\in W\) and \(z\in A\). Every member of \(\mathcal H\setminus\{W,A\}\) meets
\(\{x,y\}\), so \(\{x,y,z\}\) is a transversal of \(\mathcal H\), contradicting \(\tau(\mathcal H)\ge4\).
\end{proof}

The preceding two lemmas yield the following uniform \(3\)-shadow bound. Define
    \begin{equation}\label{eq:w0}
         w_0(t)=
        \begin{cases}
               5, & t=1,\\
               7, & t=2,\\
               10, & t=3,\\
               \left\lceil\dfrac{3+\sqrt{1+8t^2(t+1)}}{2}\right\rceil, & t\ge4.
        \end{cases}
    \end{equation}

\begin{theorem}\label{thm:three-shadow-transversal}
Let \(t\ge1\) and \(w\ge w_0(t)\). If \(\varnothing\ne\mathcal H\subseteq\binom{X}{w}\) satisfies \(\tau(\mathcal H)\ge t+1\), then \[|\partial_3\mathcal H|\ge\binom{w+t}{3}.\]
Moreover, equality is attained by \(\mathcal H=\binom{S}{w}\) whenever \(S\subseteq X\) and \(|S|=w+t\).
\end{theorem}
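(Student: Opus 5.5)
We apply Lemma~\ref{lem:common-point-shadow} with \(J=\varnothing\), \(m=0\), \(V=X\), \(k=s=w\), and \(\mathcal F=\mathcal H\). The hypothesis \(k\ge t+2\) holds because \(w_0(t)\ge t+2\) in every case. Fix any \(W\in\mathcal H\). The argument splits into the exhaustive cases of that lemma, and in each case the conclusion must be shown to be at least \(\binom{w+t}{3}\).

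\textbf{Case (i).} If \(|W\setminus A|\le t\) for all \(A\in\mathcal H\), part (i) gives the bound directly.

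\textbf{Case (ii).} If some \(A\) has \(|W\setminus A|\ge t+1\) and \(|W\cap A|\ge t\), part (ii) gives
\[
\binom{w+t}{3}+\frac{(w-1)(w-2)-2t^2(t+1)}{2}.
\]
It therefore suffices that \(w^2-3w+2-2t^2(t+1)\ge0\). Solving the quadratic, this holds exactly when
\[
w\ge \frac{3+\sqrt{1+8t^2(t+1)}}{2},
\]
which is the defining threshold of \(w_0(t)\) for \(t\ge4\). For \(t=1,2\) I check the values directly: \(w_0=5\) gives \(12-4=8\ge0\), and \(w_0=7\) gives \(30-24\ge0\). The quadratic is increasing for \(w\ge2\), so larger \(w\) also works. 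For \(t=3\) one needs \((w-1)(w-2)\ge72\), which holds for \(w\ge 10\) since \(9\cdot8=72\).

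\textbf{Case (iii), \(t\ge2\).} If some \(A\) has \(|W\setminus A|\ge t+1\) and \(|W\cap A|<t\), part (iii) with \(m=0\) gives
\[
3\binom w3-\binom{w-1}{3}-\binom{t-1}{3}.
\]
We must show this is at least \(\binom{w+t}{3}\). I would write
\[
3\binom w3-\binom{w-1}3=2\binom w3+\binom{w-1}2
\]
and compare with
\[
\binom{w+t}{3}=\binom w3+t\binom w2+\binom t2 w+\binom t3.
\]
The needed inequality then becomes
\[
\binom w3+\binom{w-1}{2}\ge t\binom w2+\binom t2 w+\binom t3+\binom{t-1}3.
\]
The left side is cubic in \(w\) and the right side is quadratic in \(w\), so for \(w\ge w_0(t)\sim\sqrt2\,t^{3/2}\) the inequality should hold comfortably once \(w/3\gtrsim t\). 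Note that \(w_0\) is of larger order than \(3t\) only for large \(t\), so I expect to verify it as follows:
\begin{itemize}
\item for \(t\ge4\), use \(w\ge w_0(t)\ge \sqrt2\,t^{3/2}\) together with a crude estimate, checking small \(t\) (say \(t\le 8\)) numerically;
\item for \(t=2\), \(w\ge7\), check directly, using monotonicity of the difference in \(w\).
\end{itemize}
For \(t=3\) the threshold \(w=10\) is delicate, and this is exactly what Lemma~\ref{lem:exceptional-shadow} covers. If \(|W\cap A|\le2\) and \(w=10\), that lemma gives the bound. For \(w\ge11\) I check the inequality above directly.

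\textbf{Case (iv), \(t=1\).} If \(|W\setminus A|\ge2\) and \(|W\cap A|<1\), then \(A\) and \(W\) are disjoint, and part (iv) gives \(2\binom w3\). We need \(2\binom w3\ge\binom{w+1}3=\binom w3+\binom w2\), that is, \(\binom w3\ge\binom w2\), which holds iff \(w\ge5\). This matches \(w_0(1)=5\).

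\textbf{Equality.} For \(\mathcal H=\binom Sw\) with \(|S|=w+t\), every \(t\)-set misses some \(w\)-subset of \(S\), so \(\tau\ge t+1\). The shadow is \(\binom S3\), which has size \(\binom{w+t}3\) since \(w\ge3\).

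\textbf{Main obstacle.} The step I expect to need the most care is case (iii): showing the polynomial inequality holds uniformly for all \(t\ge4\) at \(w=w_0(t)\), and is monotone in \(w\) beyond that. My plan is to show that the difference is increasing in \(w\) for \(w\ge t\), and then to bound it at \(w=w_0(t)\) using \(w_0(t)^2\ge 2t^2(t+1)+3w_0-2\).
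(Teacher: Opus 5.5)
Your proposal follows the paper's proof step for step. Both use the same case split from Lemma~\ref{lem:common-point-shadow} with $m=0$, the same quadratic threshold $(w-1)(w-2)\ge 2t^2(t+1)$ in case (ii), the reduction to $\binom w3\ge\binom w2$ for $t=1$, and Lemma~\ref{lem:exceptional-shadow} for the single failing instance $(t,w)=(3,10)$ of case (iii). The paper settles the case (iii) polynomial inequality as you anticipate: direct values at $w=7,11,15,19$ for $t=2,\dots,5$, and for $t\ge6$ the bound $w_0(t)\ge4t$ together with $6D_t(4t)=2t^3+33t^2-41t+12>0$.

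One correction to your plan: the difference is not increasing for all $w\ge t$. Its increment $\tfrac12\bigl(w^2+(1-2t)w-t^2+t-2\bigr)$ stays negative until roughly $w\approx(1+\sqrt2)t$, and asymptotically $w\approx 3t$ is not enough for positivity (about $3.9t$ is needed). This does no harm, because $w_0(t)$ already lies past these points wherever your plan relies on them.
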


\begin{proof}
Apply Lemma~\ref{lem:common-point-shadow} with \(m=0\) and \(k=s=w\), and fix \(W\in\mathcal H\).
If \(|W\setminus A|\le t\) for every \(A\in\mathcal H,\) then it follows from Lemma~\ref{lem:common-point-shadow}(i) that
\(|\partial_3\mathcal H|\ge\binom{w+t}{3}\).
Suppose that there exists \(A\in\mathcal H\) such that \(q=|W\setminus A|\ge t+1, r=|W\cap A|=w-q\).
If \(r\ge t\), then it follows from Lemma~\ref{lem:common-point-shadow}(ii) that
\[|\partial_3\mathcal H|\ge\binom{w+t}{3}+\frac{(w-1)(w-2)-2t^2(t+1)}{2}.\]
Since \(w\ge w_0(t)\), we have
\((w-1)(w-2)\ge2t^2(t+1)\).
Therefore,
\(|\partial_3\mathcal H|\ge\binom{w+t}{3}\).
It remains to consider \(r<t\). If \(t=1\), then \(r=0\), and by Lemma~\ref{lem:common-point-shadow}(iv), we have
\[|\partial_3\mathcal H|\ge2\binom{w}{3}=\binom{w+1}{3}+\frac{w(w-1)(w-5)}{6}.\]
Since \(w\ge w_0(1)=5\), it follows that
\(|\partial_3\mathcal H|\ge\binom{w+1}{3}\).
Assume \(t\ge2\). By Lemma~\ref{lem:common-point-shadow}(iii),
\[|\partial_3\mathcal H|\ge3\binom{w}{3}-\binom{w-1}{3}-\binom{t-1}{3}.\]
Define
    \begin{equation}\label{eq:Dt}
           D_t(w)=3\binom{w}{3}-\binom{w-1}{3}-\binom{t-1}{3}-\binom{w+t}{3}.
    \end{equation}
Thus it remains to prove that \(D_t(w)\ge0\). The successive difference is 
    \begin{equation}\label{eq:Dt-difference}
           D_t(w+1)-D_t(w)=\frac{w^2+(1-2t)w-t^2+t-2}{2}.
    \end{equation}
For \(t=2,3,4,5\), the difference in \eqref{eq:Dt-difference} is positive for \(w\ge7,11,15,19\), respectively. Since
\(D_2(7)=1\), \(D_3(11)=11\), \(D_4(15)=31\), and \(D_5(19)=63\),
we have \(D_t(w)>0\) throughout these ranges.

For \(t\ge6\), the inequality \((w-1)(w-2)\ge2t^2(t+1)\) implies \(w\ge4t\). Indeed, if \(w\le4t-1\), then
\((w-1)(w-2)\le(4t-2)(4t-3)<2t^2(t+1),\) a contradiction. Moreover, \[6D_t(4t)=2t^3+33t^2-41t+12>0.\]
For \(w\ge4t\), by \eqref{eq:Dt-difference}, we have
\[D_t(w+1)-D_t(w)\ge\frac{7t^2+5t-2}{2}>0.\]
Hence \(D_t(w)>0\) throughout this range.
The only remaining case is \((t,w)=(3,10)\) with \(r<3\). Since \(|W\cap A|=r\le2\), Lemma~\ref{lem:exceptional-shadow} gives
\(|\partial_3\mathcal H|\ge\binom{13}{3}\).

Finally, let \(S\subseteq X\) with \(|S|=w+t\), and let \(\mathcal H=\binom{S}{w}\).
Then \(\tau(\mathcal H)=t+1\) and \(\partial_3\mathcal H=\binom{S}{3}\). Hence
\[|\partial_3\mathcal H|=\binom{w+t}{3},\]
and equality is attained.
\end{proof}

\begin{remark}
The thresholds \(w_0(1)=5\) and \(w_0(2)=7\) are sharp. For \(t=1\) and \(w=4\), let \(\mathcal H\) consist of two
disjoint \(4\)-sets. Then \(\tau(\mathcal H)=2\) and \(|\partial_3\mathcal H|=8<\binom{5}{3}\).
For \(t=2\) and \(w=6\), let \(X_1\) and \(X_2\) be disjoint sets with \(|X_1|=6\) and \(|X_2|=7\). Define
\(\mathcal H=\binom{X_1}{6}\cup\binom{X_2}{6}\).
Then \(\tau(\mathcal H)=3\) and
\[|\partial_3\mathcal H|=\binom{6}{3}+\binom{7}{3}=55<\binom{8}{3}.\]
These examples show that the thresholds \(w_0(1)=5\) and
\(w_0(2)=7\) are best possible.
\end{remark}

\subsection{Bounds and constructions for CPECC codes}

We now apply the uniform \(3\)-shadow bound established in the previous subsection to CPECC codes. Since the \(3\)-shadows of distinct information classes are disjoint, this bound immediately yields an improved upper bound for CPECC codes with \(e=w-3\). We then show that this bound is attainable and asymptotically tight by using Steiner systems and asymptotically optimal \(3\)-packings. We first obtain the following upper bound for CPECC codes.

\begin{corollary}\label{cor:CPECC-three-shadow}
Let \(t\ge1\), \(w\ge w_0(t)\), and \(n\ge w+t\). Then
\[C'(n,t,w,w-3)\le\left\lfloor\frac{\binom{n}{3}}{\binom{w+t}{3}}\right\rfloor.\]
\end{corollary}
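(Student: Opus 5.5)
The bound follows by a double-counting argument that combines the disjointness of $3$-shadows with Theorem~\ref{thm:three-shadow-transversal}. Let $\mathcal P_1,\ldots,\mathcal P_M$ be the information classes of an $(n,t,w,w-3)$-CPECC code on an $n$-set $X$.

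If $M\le1$, the bound is immediate. Since $n\ge w+t$, we have $\binom{n}{3}\ge\binom{w+t}{3}$, so the right-hand side is at least $1$.

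Now assume $M\ge2$. Each class $\mathcal P_i$ is a nonempty $w$-uniform family, because every block of a CPECC code has size exactly $w$. As noted after Definition~\ref{def:code}, Property $A(t)$ gives $\tau(\mathcal P_i)\ge t+1$. Since $w\ge w_0(t)$, Theorem~\ref{thm:three-shadow-transversal} applies to each $\mathcal P_i$ and yields
\[|\partial_3\mathcal P_i|\ge\binom{w+t}{3}\quad\text{for every } i\in[M].\]

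By Lemma~\ref{lem:lpecc-structural-w3}(iii), which is available because $M\ge2$, the shadows $\partial_3\mathcal P_1,\ldots,\partial_3\mathcal P_M$ are pairwise disjoint subfamilies of $\binom{X}{3}$. (In the uniform setting this is direct: $|A\triangle B|\ge 2w-5$ forces $|A\cap B|\le2$.) Summing over the classes gives
\[M\binom{w+t}{3}\le\sum_{i=1}^M|\partial_3\mathcal P_i|\le\binom{n}{3}.\]
Dividing by $\binom{w+t}{3}$ and using that $M$ is an integer yields the floor. Since the code was arbitrary, the same bound holds for $C'(n,t,w,w-3)$.

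All the real difficulty lies in Theorem~\ref{thm:three-shadow-transversal}. The only point needing care here is the degenerate case $M=1$, where the structural lemma is not invoked, and $n\ge w+t$ handles it.
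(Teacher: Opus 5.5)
Your proposal is correct and follows the paper's proof: you apply Theorem~\ref{thm:three-shadow-transversal} to each class using $\tau(\mathcal P_i)\ge t+1$, note that $|A\triangle B|=2w-2|A\cap B|\ge 2w-5$ forces pairwise disjoint $3$-shadows, and double count inside $\binom{X}{3}$. Your separate treatment of $M\le1$ is harmless but unnecessary, because the counting inequality $M\binom{w+t}{3}\le\sum_i|\partial_3\mathcal P_i|\le\binom{n}{3}$ already holds trivially when there is a single class.
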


\begin{proof}
Let \(\mathcal P_1,\ldots,\mathcal P_M\) be the information classes of an \((n,t,w,w-3)\)-CPECC code. By Property \(A(t)\), each \(\mathcal P_i\) satisfies \(\tau(\mathcal P_i)\ge t+1\). Since all blocks have size \(w\), it follows from Theorem~\ref{thm:three-shadow-transversal} that
\[|\partial_3\mathcal P_i|\ge\binom{w+t}{3}\qquad (1\le i\le M).\]
For \(A\in\mathcal P_i\) and \(B\in\mathcal P_j\) with \(i\ne j\), Property \(C(w-3)\) gives
\[|A\triangle B|=2w-2|A\cap B|\ge 2w-5.\]
Thus \(|A\cap B|\le2\), and hence the \(3\)-shadows of distinct information classes are disjoint. Consequently,
\[M\binom{w+t}{3}\le\sum_{i=1}^{M}|\partial_3\mathcal P_i|\le\binom{n}{3}.\]
The result follows.
\end{proof}

\begin{remark}\label{rem:CPECC-threshold-comparison}

Zhao and Zhang~\cite[Theorem~9]{2025} obtained the same upper bound as in Corollary~\ref{cor:CPECC-three-shadow} under the condition
     \begin{equation}\label{eq:old-condition}
           2\binom{w}{3}\ge\binom{w-t-1}{3}+\binom{w+t}{3}.
     \end{equation}
The difference between the two sides is
\[
    \begin{aligned}
          &2\binom{w}{3}-\binom{w-t-1}{3}-\binom{w+t}{3}\\
          &\quad=\frac{w^2-(2t^2+2t+3)w+3t^2+3t+2}{2}.
    \end{aligned}
\]
For integers \(w\ge t+2\), condition~\eqref{eq:old-condition} is equivalent to \(w\ge 2t^2+2t+2\).
In contrast, Corollary~\ref{cor:CPECC-three-shadow} requires only \(w\ge w_0(t)\), where
\[w_0(t)=(\sqrt{2}+o(1))t^{3/2}.\]
Thus the sufficient weight threshold is reduced from quadratic order to order \(t^{3/2}\). Table~\ref{tab:threshold-comparison} compares the two thresholds for \(1\le t\le18\). 

\end{remark}

\begin{table*}[t]
\centering
\caption{Comparison of sufficient weight thresholds for CPECC codes.}
\label{tab:threshold-comparison}
\renewcommand{\arraystretch}{1.15}
\setlength{\tabcolsep}{3pt}
\small
\begin{tabular}{c|*{18}{c}}
\hline \(t\)
& 1 & 2 & 3 & 4 & 5 & 6 & 7 & 8 & 9
& 10 & 11 & 12 & 13 & 14 & 15 & 16 & 17 & 18 \\
\hline Zhao--Zhang~\cite{2025}
& 6 & 14 & 26 & 42 & 62 & 86 & 114 & 146 & 182
& 222 & 266 & 314 & 366 & 422 & 482 & 546 & 614 & 686 \\
Our result
& 5 & 7 & 10 & 15 & 19 & 24 & 30 & 36 & 42
& 49 & 56 & 63 & 71 & 79 & 87 & 95 & 104 & 113 \\
\hline
\end{tabular}
\end{table*}

Steiner systems provide the equality case for the improved bound.

\begin{corollary}\label{cor:CPECC-Steiner}
Let \(t\ge1\) and \(w\ge w_0(t)\). If an \(S(3,w+t,n)\) exists, then \[ C'(n,t,w,w-3) = \frac{\binom n3}{\binom{w+t}{3}}. \] 
\end{corollary}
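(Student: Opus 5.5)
The upper bound comes directly from Corollary~\ref{cor:CPECC-three-shadow}. Since an \(S(3,w+t,n)\) exists, we have \(n\ge w+t\) and \(\binom{w+t}{3}\mid\binom n3\), so the floor is exact and \(C'(n,t,w,w-3)\le\binom n3/\binom{w+t}{3}\). It therefore suffices to construct an \((n,t,w,w-3)\)-CPECC code with \(\binom n3/\binom{w+t}{3}\) information classes.

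Let \((X,\mathcal S)\) be the Steiner system, with \(|X|=n\). For each block \(S\in\mathcal S\) define the information class \(\mathcal P_S=\binom{S}{w}\). There are \(|\mathcal S|=\binom n3/\binom{w+t}{3}\) such classes. They are nonempty, and they are pairwise disjoint: a \(w\)-set lying in two distinct blocks would give those blocks at least \(w\ge3\) common points, which is impossible.

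We check the conditions of Definition~\ref{def:code} in turn.
\begin{itemize}
\item \(A(t)\): for any \(T\in\binom Xt\), we have \(|S\setminus T|\ge w\), so some \(w\)-subset of \(S\setminus T\) is a member of \(\mathcal P_S\) avoiding \(T\).
\item \(B'(w)\): every block of every class has size exactly \(w\).
\item \(C(w-3)\): take \(A\in\mathcal P_S\) and \(B\in\mathcal P_{S'}\) with \(S\ne S'\). Then \(|A\cap B|\le|S\cap S'|\le2\), since distinct Steiner blocks share at most two points. Hence \(|A\triangle B|=2w-2|A\cap B|\ge2w-4\ge2(w-3)+1\).
\end{itemize}

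Everything here is routine; the only point needing care is the distance computation in \(C(w-3)\), which rests on the intersection property of Steiner blocks. The condition \(w\ge w_0(t)\) is used only to invoke the upper bound.
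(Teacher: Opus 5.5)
Your proposal is correct and follows essentially the same route as the paper: you take the classes $\binom{S}{w}$ for the blocks $S$ of the Steiner system, and you get the distance condition from $|S\cap S'|\le 2$. You then match this against the upper bound of Corollary~\ref{cor:CPECC-three-shadow}. Your version is slightly more explicit than the paper's, in checking Property $A(t)$ directly, the pairwise disjointness of the classes, and that the floor is exact.
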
 

\begin{proof}
Let \((X,\mathcal S)\) be an \(S(3,w+t,n)\). For each \(A\in\mathcal S\), define \(\mathcal P_A=\binom{A}{w}\).
Since \(|A|=w+t\), we have \(\tau(\mathcal P_A)=t+1\), and every block has size \(w\).
Distinct blocks of \(\mathcal S\) intersect in at most two points. Hence any two blocks from distinct information classes have
symmetric difference at least \(2w-4\ge2(w-3)+1\). Thus these information classes form an \((n,t,w,w-3)\)-CPECC code.
Since every triple of \(X\) belongs to exactly one block of \(\mathcal S\), we have
\[|\mathcal S|=\frac{\binom n3}{\binom{w+t}3}.\]
The construction therefore attains the upper bound in Corollary~\ref{cor:CPECC-three-shadow}.
\end{proof}

By Keevash's existence theorem, such Steiner systems exist for all sufficiently large admissible lengths.

\begin{corollary}\label{cor:CPECC-large-n}
Let \(t\ge1\) and \(w\ge w_0(t)\) be fixed. For all sufficiently large \(n\) satisfying \(\binom{w+t-i}{3-i} \mid \binom{n-i}{3-i}, 0\le i\le2,\) we have
\[C'(n,t,w,w-3)=\frac{\binom{n}{3}}{\binom{w+t}{3}}.\]
\end{corollary}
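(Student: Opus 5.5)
The plan is to combine Keevash's existence theorem (Theorem~\ref{thm:Keevash-design}) with Corollary~\ref{cor:CPECC-Steiner}. Since \(t\) and \(w\) are fixed, set \(k=w+t\) and \(r=3\). Then \(k>r\ge2\) holds, because \(w\ge w_0(t)\ge5\) gives \(k\ge6>3\). So Theorem~\ref{thm:Keevash-design} applies with these fixed parameters. It gives a threshold \(n_0=n_0(t,w)\) such that for every \(n\ge n_0\) satisfying the divisibility conditions \(\binom{w+t-i}{3-i}\mid\binom{n-i}{3-i}\) for \(0\le i\le2\), an \(S(3,w+t,n)\) exists. These are exactly the hypotheses of the statement with \(v=n\).

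For each such \(n\), I would then invoke Corollary~\ref{cor:CPECC-Steiner} with the Steiner system just obtained. It yields \(C'(n,t,w,w-3)=\binom n3/\binom{w+t}{3}\). The \(i=0\) divisibility condition guarantees that this quotient is an integer, so the floor in Corollary~\ref{cor:CPECC-three-shadow} is harmless and the two bounds coincide. The implicit requirement \(n\ge w+t\) of Definition~\ref{def:code} also holds once \(n_0\ge w+t\), which we may assume by enlarging \(n_0\).

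There is no real obstacle. The only point needing care is that Keevash's theorem is stated for fixed \(k,r\) with \(v\) sufficiently large, which matches our setting since \(t,w\) are fixed. The argument is therefore a direct two-step citation.
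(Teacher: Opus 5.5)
Your proposal is correct and follows essentially the same route as the paper: apply Keevash's theorem (Theorem~\ref{thm:Keevash-design}) with $k=w+t$ and $r=3$ to obtain an $S(3,w+t,n)$ for all sufficiently large admissible $n$, then conclude with Corollary~\ref{cor:CPECC-Steiner}. Your extra checks ($k\ge6>3$, enlarging $n_0$ so that $n\ge w+t$) are harmless details that the paper leaves implicit.
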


\begin{proof}
By Theorem~\ref{thm:Keevash-design}, an \(S(3,w+t,n)\) exists for all sufficiently large \(n\) satisfying the stated divisibility conditions.
The result follows from Corollary~\ref{cor:CPECC-Steiner}.
\end{proof}

Using asymptotically optimal \(3\)-packings, we obtain the following asymptotic result without divisibility conditions. 

\begin{corollary}\label{cor:CPECC-asymptotic-w3}
Let \(t\ge1\) and \(w\ge w_0(t)\) be fixed. Then
    \begin{equation}\label{eq:CPECC-asymptotic}
            C'(n,t,w,w-3)=(1+o(1))\frac{\binom{n}{3}}{\binom{w+t}{3}}\qquad (n\to\infty).
    \end{equation}
\end{corollary}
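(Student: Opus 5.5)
The upper bound comes straight from Corollary~\ref{cor:CPECC-three-shadow}, so the work is a matching lower bound. I will get it from R\"odl's packing theorem (Theorem~\ref{thm:packing-existence}) together with the construction used in Corollary~\ref{cor:CPECC-Steiner}.

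\textbf{Upper bound.} For fixed $t$ and $w\ge w_0(t)$, and every $n\ge w+t$, Corollary~\ref{cor:CPECC-three-shadow} gives
\[
C'(n,t,w,w-3)\le \frac{\binom n3}{\binom{w+t}3}.
\]
This is already of the form $(1+o(1))\binom n3/\binom{w+t}3$.

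\textbf{Lower bound.} Apply Theorem~\ref{thm:packing-existence} with $r=3$, $k=w+t$ and $v=n$. This is allowed because $k=w+t>3$: indeed $w\ge w_0(t)\ge 5$. We obtain a $3$-$(n,w+t,1)$ packing $\mathcal S$ with $|\mathcal S|=(1-o(1))\binom n3/\binom{w+t}3$. For each $A\in\mathcal S$, take the information class $\mathcal P_A=\binom{A}{w}$, exactly as in the proof of Corollary~\ref{cor:CPECC-Steiner}. We then check the three conditions.

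\begin{itemize}
\item \textbf{Disjoint nonempty classes.} The classes are nonempty. They are pairwise disjoint because two distinct blocks $A,A'$ of the packing share at most two points, so no $w$-set lies in both when $w\ge 3$.
\item \textbf{Property $A(t)$.} Since $|A|=w+t$, we have $\tau(\mathcal P_A)=t+1$: any $t$-set $T$ misses at least $w$ points of $A$, so some $w$-subset of $A$ avoids $T$.
\item \textbf{Constant weight.} Every block has size exactly $w$.
\item \textbf{Property $C(w-3)$.} If $B\in\mathcal P_A$ and $B'\in\mathcal P_{A'}$ with $A\ne A'$, then $|B\cap B'|\le|A\cap A'|\le 2$. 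Hence $|B\triangle B'|\ge 2w-4\ge 2(w-3)+1$.
\end{itemize}

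So the packing yields an $(n,t,w,w-3)$-CPECC code of size $|\mathcal S|$, and
\[
C'(n,t,w,w-3)\ge(1-o(1))\frac{\binom n3}{\binom{w+t}3}.
\]

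\textbf{Conclusion.} Combining the two bounds gives \eqref{eq:CPECC-asymptotic}.

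There is no real obstacle here. The only points to watch are the hypotheses of R\"odl's theorem ($k>r\ge2$ with $k,r$ fixed), which hold since $t,w$ are fixed, and the fact that the packing property alone, without exact coverage, is enough to guarantee the distance condition. The asymptotic equality comes from combining the packing's density $1-o(1)$ with the upper bound.
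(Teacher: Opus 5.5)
Your proposal is correct and follows the paper's proof: the upper bound from Corollary~\ref{cor:CPECC-three-shadow}, and the lower bound from R\"odl's packing theorem with the classes $\mathcal P_A=\binom{A}{w}$ as in Corollary~\ref{cor:CPECC-Steiner}. Your explicit checks of pairwise disjointness of the classes and of Property $A(t)$ fill in details the paper leaves implicit. Strictly, your argument shows $\tau(\mathcal P_A)\ge t+1$, which is all that is needed, rather than equality.
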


\begin{proof}
By Theorem~\ref{thm:packing-existence}, there exists a \(3\)-\((n,w+t,1)\) packing \(\mathcal S\) with
\[|\mathcal S|=(1-o(1))\frac{\binom{n}{3}}{\binom{w+t}{3}}.\]
For each \(A\in\mathcal S\), let \(\mathcal P_A=\binom{A}{w}\). As in the proof of Corollary~\ref{cor:CPECC-Steiner}, these information classes form an \((n,t,w,w-3)\)-CPECC code. 
It follows that
\[C'(n,t,w,w-3)\ge(1-o(1))\frac{\binom{n}{3}}{\binom{w+t}{3}}.\]
Together with the upper bound in Corollary~\ref{cor:CPECC-three-shadow}, this proves \eqref{eq:CPECC-asymptotic}.
\end{proof}

\section{Nonuniform 3-shadow bounds and LPECC codes}\label{sec:LPECC-w3}

In this section, we establish a nonuniform \(3\)-shadow bound and apply it to obtain an improved upper bound for LPECC codes with \(e=w-3\). The obtained bound is strictly stronger than the previous bound of Zhao and Zhang~\cite{2025} whenever both bounds apply.

Unlike CPECC codes, LPECC codes may contain blocks of different sizes, which leads to a mixed-weight shadow problem. Recall that an information class is called normal if all its blocks have sizes in \(\{w,w-1,w-2\}\), and exceptional otherwise. By Lemma~\ref{lem:lpecc-structural-w3}, at most one information class can be exceptional. We first establish the \(3\)-shadow bound for normal information classes and then handle the possible exceptional class. Combining these estimates yields the improved LPECC upper bound.

\subsection{The nonuniform 3-shadow bound}

Let \[\mathcal H=\mathcal H_w\cup\mathcal H_{w-1}\cup\mathcal H_{w-2}, \qquad \mathcal H_j\subseteq\binom{V}{j},\]
and let \(\alpha,\beta\) be two distinct points outside \(V\). Define \(U=V\cup\{\alpha,\beta\}\) and the lifted \(w\)-uniform family \[\widehat{\mathcal H}=\mathcal H_w\cup\{H\cup\{\alpha\}:H\in\mathcal H_{w-1}\}\cup\{H\cup\{\alpha,\beta\}:H\in\mathcal H_{w-2}\}\subseteq\binom Uw.\]
The corresponding lifted \(3\)-shadow size is denoted by
\[\Psi_3(\mathcal H)=|\partial_3\widehat{\mathcal H}|.\]
Since lifting does not preserve the transversal number in general,
Theorem~\ref{thm:three-shadow-transversal} cannot be applied directly to \(\widehat{\mathcal H}\). We therefore introduce a compression operation that transforms the family into one whose members contain the same two points without increasing its \(3\)-shadow size.

For a \(k\)-uniform family \(\mathcal A\subseteq\binom Uk\) and
distinct \(a,x\in U\), define the compression operation \(C_{a\leftarrow x}\) as follows. For each member \(A\in\mathcal A\) satisfying \(x\in A\) and \(a\notin A\), replace \(A\) by \((A\setminus\{x\})\cup\{a\}\), while leaving all other members unchanged. After removing repetitions, the resulting family
is denoted by \(C_{a\leftarrow x}(\mathcal A)\).

\begin{lemma}\label{lem:compression}
Let \(U\) be a finite set, let \(\mathcal A\subseteq\binom Uk\), and let \(a,x\in U\) be distinct. Then, for every \(1\le r\le k\), we have
\[|\partial_r C_{a\leftarrow x}(\mathcal A)|\le|\partial_r\mathcal A|.\]
\end{lemma}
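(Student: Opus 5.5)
The plan is to show the stronger containment
\[\partial_r C_{a\leftarrow x}(\mathcal A)\subseteq C_{a\leftarrow x}\bigl(\partial_r\mathcal A\bigr),\]
where on the right the same compression rule is applied to the \(r\)-uniform family \(\partial_r\mathcal A\). Since the compression of a family never increases its size, this gives the lemma at once. Write \(\mathcal A'=C_{a\leftarrow x}(\mathcal A)\) and \(\mathcal S=\partial_r\mathcal A\). Recall that \(C_{a\leftarrow x}(\mathcal S)\) consists of the sets \(S\in\mathcal S\) that are left unchanged, namely those with \(x\notin S\), or \(a\in S\), or \(S'=(S\setminus\{x\})\cup\{a\}\in\mathcal S\), together with the images \(S'\) of the remaining ones. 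The map \(S\mapsto\) (image) is injective on \(\mathcal S\), which gives \(|C_{a\leftarrow x}(\mathcal S)|=|\mathcal S|\).

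Fix \(R\in\partial_r\mathcal A'\), so \(R\subseteq B\) for some \(B\in\mathcal A'\). The proof splits on how \(B\) arose and on the position of \(a,x\) relative to \(R\).

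\textbf{Case 1: \(B\in\mathcal A\) was not moved.} Then \(R\in\mathcal S\). If \(R\) is also not moved by the compression of \(\mathcal S\), we are done. Otherwise \(x\in R\), \(a\notin R\), and \(R'=(R\setminus\{x\})\cup\{a\}\notin\mathcal S\). Since \(x\in B\), the set \(B\) stayed put, which forces either \(a\in B\) or \((B\setminus\{x\})\cup\{a\}\in\mathcal A\).
\begin{itemize}
\item If \(a\in B\), then \(R'\subseteq B\), so \(R'\in\mathcal S\), a contradiction.
\item If \((B\setminus\{x\})\cup\{a\}\in\mathcal A\), then again \(R'\subseteq(B\setminus\{x\})\cup\{a\}\), so \(R'\in\mathcal S\), a contradiction.
\end{itemize}

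\textbf{Case 2: \(B=(A\setminus\{x\})\cup\{a\}\) for some \(A\in\mathcal A\) with \(x\in A\), \(a\notin A\).}
\begin{itemize}
\item If \(a\notin R\), then \(R\subseteq A\setminus\{x\}\), so \(R\in\mathcal S\) and \(x\notin R\). Hence \(R\) is unchanged and lies in \(C_{a\leftarrow x}(\mathcal S)\).
\item If \(a\in R\), put \(S=(R\setminus\{a\})\cup\{x\}\subseteq A\), so \(S\in\mathcal S\), with \(x\in S\) and \(a\notin S\). Either \(S\) is moved, in which case its image is exactly \(R\), or \(S\) stays because \(R=(S\setminus\{x\})\cup\{a\}\in\mathcal S\).
\end{itemize}
In the second subcase we must still confirm that \(R\) itself survives the compression of \(\mathcal S\). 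This holds because \(a\in R\), so \(R\) is never moved.

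The main obstacle is purely bookkeeping: in each case we must verify that \(R\) itself, and not merely some other set, belongs to \(C_{a\leftarrow x}(\mathcal S)\). The point to get right is the reason a set survives, namely "contains \(a\)", "avoids \(x\)", or "its image is already present". Because the compression acts injectively on \(\mathcal S\), cardinality is preserved there, and the containment then finishes the proof.
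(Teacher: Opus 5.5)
There is a gap: you have proved the lemma for a different operator than the one in the statement. Your ``Recall'' sentence describes the classical Erd\H{o}s--Ko--Rado shift, in which a set \(S\) with \(x\in S\), \(a\notin S\) is kept whenever \((S\setminus\{x\})\cup\{a\}\) is already present. The paper's \(C_{a\leftarrow x}\) is unconditional: \emph{every} member containing \(x\) but not \(a\) is replaced, and only afterwards are repetitions removed. This is deliberate. The classical shift preserves \(|\mathcal A|\), so it cannot in general produce the family \(J*\mathcal F\) with \(\mathcal F\subseteq\binom V{w-2}\) needed in Lemma~\ref{lem:mixed-reduction}.

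For the paper's operator, your key containment and your cardinality identity are both false. Take \(\mathcal A=\{\{a,x,y,z\}\}\) and \(r=3\). Then \(C_{a\leftarrow x}(\mathcal A)=\mathcal A\), so \(\{x,y,z\}\in\partial_3 C_{a\leftarrow x}(\mathcal A)\). But in \(C_{a\leftarrow x}(\partial_3\mathcal A)\) the triple \(\{x,y,z\}\) is replaced by \(\{a,y,z\}\), which is already present. Hence \(\{x,y,z\}\notin C_{a\leftarrow x}(\partial_3\mathcal A)\), and \(|C_{a\leftarrow x}(\partial_3\mathcal A)|=3<4=|\partial_3\mathcal A|\). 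The failing step is your Case~1. Your contradiction there uses the clause \(R'\notin\mathcal S\), whereas under the paper's definition \(R\) is moved as soon as \(x\in R\) and \(a\notin R\).

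The repair is short. Write \(S_{a\leftarrow x}\) for the classical shift. Then \(C_{a\leftarrow x}(\mathcal A)\) is obtained from \(S_{a\leftarrow x}(\mathcal A)\) by deleting the members \(A\) with \(x\in A\), \(a\notin A\) and \((A\setminus\{x\})\cup\{a\}\in\mathcal A\). So \(C_{a\leftarrow x}(\mathcal A)\subseteq S_{a\leftarrow x}(\mathcal A)\), and therefore \(|\partial_r C_{a\leftarrow x}(\mathcal A)|\le|\partial_r S_{a\leftarrow x}(\mathcal A)|\le|S_{a\leftarrow x}(\partial_r\mathcal A)|=|\partial_r\mathcal A|\). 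The last two steps are exactly what your case analysis (which is correct for \(S_{a\leftarrow x}\)) and your injectivity argument give.

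With this line added, your route is genuinely different from the paper's. The paper never passes through the shift of the shadow. Instead it gives a direct injection \(R\mapsto(R\setminus\{a\})\cup\{x\}\) from \(\partial_r\mathcal C\setminus\partial_r\mathcal A\) into \(\partial_r\mathcal A\setminus\partial_r\mathcal C\), where \(\mathcal C=C_{a\leftarrow x}(\mathcal A)\), using that every member of \(\mathcal C\) containing \(x\) also contains \(a\). Your approach yields the stronger structural inclusion \(\partial_r S_{a\leftarrow x}(\mathcal A)\subseteq S_{a\leftarrow x}(\partial_r\mathcal A)\), but only for the classical shift. That is why the comparison step is indispensable here.
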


\begin{proof}
Let \(\mathcal C=C_{a\leftarrow x}(\mathcal A)\).  For any \(R\in\partial_r\mathcal C\setminus\partial_r\mathcal A\),
we have \(a\in R\) and \(x\notin R\). Define
\[R'=(R\setminus\{a\})\cup\{x\}.\]
Then \(R'\in\partial_r\mathcal A\). We claim that \(R'\notin\partial_r\mathcal C\). Otherwise, there exists
\(C\in\mathcal C\) such that \(R'\subseteq C\). Since \(x\in R'\), we have \(x\in C\). Thus, \(C\) must be an unchanged
member of \(\mathcal A\) under the compression, and hence \(a\in C\). Therefore, we have
\[R=(R'\setminus\{x\})\cup\{a\}\subseteq C,\]
which implies \(R\in\partial_r\mathcal A\), a contradiction.
Thus \(R'\in\partial_r\mathcal A\setminus \partial_r\mathcal C\). The map \(R\mapsto R'\) is injective, since \(R\) can be uniquely recovered from \(R'\). Therefore, we have
\[|\partial_r\mathcal C\setminus\partial_r\mathcal A|\le |\partial_r\mathcal A\setminus\partial_r\mathcal C|,\]
which implies
\(|\partial_r\mathcal C|\le|\partial_r\mathcal A|\).
This completes the proof.
\end{proof}

Repeated applications of the above compression operation transform the lifted family \(\widehat{\mathcal H}\) into a family of the form \(J*\mathcal F\) without increasing the \(3\)-shadow. After removing the common points in \(J\), the resulting \((w-2)\)-uniform family \(\mathcal F\) satisfies \(\tau(\mathcal F)\ge\tau(\mathcal H)\). The following lemma gives this reduction.

\begin{lemma}\label{lem:mixed-reduction}
Let $V$ be a finite set, let $w\ge5$, and let $J=\{\alpha,\beta\}$ be disjoint from $V$. For every nonempty family \(\mathcal H\subseteq\binom Vw\cup\binom V{w-1}\cup\binom V{w-2}\), there exists a nonempty family $\mathcal F\subseteq\binom V{w-2}$ such that
\[|\partial_3(J*\mathcal F)|\le \Psi_3(\mathcal H)~~{\rm and} ~~\tau(\mathcal F)\ge\tau(\mathcal H).\]
\end{lemma}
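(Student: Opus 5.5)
The plan is to apply to the lifted family \(\widehat{\mathcal H}\subseteq\binom Uw\) a sequence of compressions that pushes \(\alpha\) and \(\beta\) into every member. By Lemma~\ref{lem:compression}, none of these steps increases the \(3\)-shadow. Concretely, list \(V=\{x_1,\dots,x_N\}\) and define
\[\mathcal A_0=\widehat{\mathcal H},\quad \mathcal A_i=C_{\alpha\leftarrow x_i}(\mathcal A_{i-1})\ (1\le i\le N),\quad \mathcal A_{N+i}=C_{\beta\leftarrow x_i}(\mathcal A_{N+i-1})\ (1\le i\le N).\]
Each \(\mathcal A_j\) is a nonempty subfamily of \(\binom Uw\), and \(|\partial_3\mathcal A_{2N}|\le|\partial_3\widehat{\mathcal H}|=\Psi_3(\mathcal H)\). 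I would then set
\[\mathcal F=\{A\setminus J:A\in\mathcal A_{2N}\}.\]

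\textbf{Every member of \(\mathcal A_{2N}\) contains \(J\).} After \(C_{\alpha\leftarrow x_i}\), no member contains \(x_i\) without containing \(\alpha\). Later compressions \(C_{\alpha\leftarrow x_{i'}}\) only remove points of \(V\) from sets lacking \(\alpha\) and insert \(\alpha\). So after stage \(N\), a member missing \(\alpha\) would avoid all of \(V\), forcing it to be a subset of \(\{\beta\}\), which is impossible since \(w\ge2\).

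The second round \(C_{\beta\leftarrow x_i}\) never removes \(\alpha\), because \(x_i\ne\alpha\). By the same argument, at the end every member contains \(\beta\) as well. Hence \(\mathcal A_{2N}=J*\mathcal F\) with \(\mathcal F\subseteq\binom V{w-2}\) nonempty, and the shadow inequality is immediate. The hypothesis \(w\ge5\) is only needed to make \(w-2\ge3\), which the later application of Lemma~\ref{lem:common-point-shadow} requires.

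\textbf{The transversal inequality.} This is the step needing care, because lifting alone destroys the transversal number. I would maintain the following invariant for every \(j\):
\[\text{for each } H\in\mathcal H \text{ there is } A\in\mathcal A_j \text{ with } A\cap V\subseteq H.\]

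For \(j=0\) this holds with \(A\) the lift of \(H\), since \(A\cap V=H\). A compression step replaces \(A\) by \((A\setminus\{x\})\cup\{a\}\) with \(a\in J\) and \(x\in V\); this only shrinks the \(V\)-part, and it is harmless for the case \(x\notin V\), which never occurs here. If the image coincides with an existing member, that member has the same \(V\)-part, so discarding duplicates preserves the invariant.

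At the end, each \(H\in\mathcal H\) contains some \(F\in\mathcal F\). Now let \(T\) be a transversal of \(\mathcal F\), which we may take inside \(V\). For each \(H\in\mathcal H\), \(T\) meets some \(F\subseteq H\), and therefore meets \(H\). So every transversal of \(\mathcal F\) is a transversal of \(\mathcal H\), which gives \(\tau(\mathcal F)\ge\tau(\mathcal H)\).

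The main obstacle is bookkeeping: checking that duplicate removal and the order of the compressions preserve both invariants. This is why I compress \(\alpha\) completely before \(\beta\), and never compress using \(\beta\) as the moved point \(x\).
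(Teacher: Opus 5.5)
Your proposal is correct and follows the same route as the paper. You compress $\widehat{\mathcal H}$ toward $\alpha$ and $\beta$ using Lemma~\ref{lem:compression}, then note that every $H\in\mathcal H$ contains some member of $\mathcal F$, so every transversal of $\mathcal F$ is a transversal of $\mathcal H$. Your fixed ordering of the compressions and your invariant $A\cap V\subseteq H$ make precise two points the paper leaves informal: that the compression process terminates, and that the correspondence $H\mapsto F_H$ is well defined once duplicates are removed.
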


\begin{proof}
Let \(U=V\cup J\). Starting from \(\widehat{\mathcal H}\subseteq\binom Uw\), apply the compressions \(C_{\alpha\leftarrow x}\) and \(C_{\beta\leftarrow x}\) for \(x\in V\) until every member contains both \(\alpha\) and \(\beta\). The resulting family has the form \(J*\mathcal F\) for some nonempty \(\mathcal F\subseteq\binom{V}{w-2}\). Since the \(3\)-shadow does not increase under compression by Lemma~\ref{lem:compression},
\[|\partial_3(J*\mathcal F)|\le |\partial_3\widehat{\mathcal H}|=\Psi_3(\mathcal H).\]
For each \(H\in\mathcal H\), let \(F_H\in\mathcal F\) be the member
obtained from \(H\) after compression and deletion of \(J\). Since \(F_H\subseteq H\), every transversal of \(\mathcal F\) is also a transversal of \(\mathcal H\). Hence
\(\tau(\mathcal F)\ge\tau(\mathcal H)\).
\end{proof}

The preceding lemma reduces the problem to families of the form \(J*\mathcal F\). We first establish the bounds for the two smallest values of \(t\), namely \(t=1\) and \(t=2\).

\begin{proposition}\label{prop:base-t1}
Let \(J\) be a two-element set disjoint from \(V\). If $\varnothing\ne\mathcal F\subseteq\binom{V}{k}$, $k\ge3$, and $\tau(\mathcal F)\ge2$, then $|\partial_3(J*\mathcal F)|\ge\binom{k+3}{3}$.
\end{proposition}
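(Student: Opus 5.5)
The plan is to apply Lemma~\ref{lem:common-point-shadow} with \(t=1\), \(m=2\) and \(s=k+2\), so that the target \(\binom{k+3}{3}\) is exactly \(\binom{s+t}{3}\). The hypotheses of that lemma hold: \(\mathcal F\) is nonempty and \(k\)-uniform, \(\tau(\mathcal F)\ge 2=t+1\), and \(k\ge3=t+2\). Fix any \(W\in\mathcal F\). For every \(A\in\mathcal F\), exactly one of three things happens: \(|W\setminus A|\le1\); or \(|W\setminus A|\ge2\) and \(|W\cap A|\ge1\); or \(W\cap A=\varnothing\), which forces \(|W\setminus A|=k\ge2\). So the relevant parts of the lemma cover all situations, and it remains to check each bound against \(\binom{s+1}{3}\).

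\emph{First case.} If \(|W\setminus A|\le1\) for every \(A\in\mathcal F\), then Lemma~\ref{lem:common-point-shadow}(i) gives \(|\partial_3(J*\mathcal F)|\ge\binom{s+1}{3}=\binom{k+3}{3}\) directly.

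\emph{Second case.} If some \(A\) has \(|W\setminus A|\ge2\) and \(|W\cap A|\ge1\), then Lemma~\ref{lem:common-point-shadow}(ii) gives
\[|\partial_3(J*\mathcal F)|\ge\binom{s+1}{3}+\frac{(s-1)(s-2)-4}{2}.\]
The extra term is nonnegative because \(s=k+2\ge5\), so \((s-1)(s-2)\ge12>4\). The internal requirement \(k\ge2t+1=3\) used in the proof of part (ii) is also satisfied.

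\emph{Third case.} If some \(A\) is disjoint from \(W\), then Lemma~\ref{lem:common-point-shadow}(iv) with \(m=2\) gives
\[|\partial_3(J*\mathcal F)|\ge2\binom{k+2}{3}-\binom{2}{3}=2\binom{k+2}{3}.\]
We also have \(\binom{k+3}{3}=\frac{k+3}{k}\binom{k+2}{3}\), and \(\frac{k+3}{k}\le2\) exactly when \(k\ge3\). Hence \(2\binom{k+2}{3}\ge\binom{k+3}{3}\) in this case as well.

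Since the three cases are exhaustive, the bound \(|\partial_3(J*\mathcal F)|\ge\binom{k+3}{3}\) follows. There is no real obstacle here. The only point needing care is the third case, where the inequality is tight in form at \(k=3\), and this is exactly where the hypothesis \(k\ge3\) is used.
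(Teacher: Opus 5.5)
Your proposal is correct and follows essentially the same route as the paper. You use the same three-way split via Lemma~\ref{lem:common-point-shadow}(i), (ii), (iv) with $(t,m)=(1,2)$ and $s=k+2$, and the same elementary checks $(s-1)(s-2)=k(k+1)\ge 4$ and $2\binom{k+2}{3}\ge\binom{k+3}{3}$ for $k\ge 3$.
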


\begin{proof}
Fix \(W\in\mathcal F\). If \(|W\setminus A|\le1 \) for all \(A\in\mathcal F,\) then Lemma~\ref{lem:common-point-shadow}(i) with \((t,m)=(1,2)\) gives the result. Otherwise, choose \(A\in\mathcal F\) with \(|W\setminus A|\ge2\). If \(W\cap A\ne\varnothing\), then Lemma~\ref{lem:common-point-shadow}(ii) with \(s=k+2\) gives
\[|\partial_3(J*\mathcal F)| \ge \binom{k+3}{3}+\frac{k(k+1)-4}{2}\ge \binom{k+3}{3}.\]
If \(W\cap A=\varnothing\), Lemma~\ref{lem:common-point-shadow}(iv) gives
\[|\partial_3(J*\mathcal F)| \ge 2\binom{k+2}{3} \ge \binom{k+3}{3}.\]
This completes the proof.
\end{proof}

\begin{proposition}\label{prop:base-t2}
Let \(J\) be a two-element set disjoint from \(V\). If $\varnothing\ne\mathcal F\subseteq\binom{V}{k}$, $k\ge5$, and $\tau(\mathcal F)\ge3$, then $|\partial_3(J*\mathcal F)|\ge\binom{k+4}{3}$.
\end{proposition}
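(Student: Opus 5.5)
We follow the proof of Proposition~\ref{prop:base-t1} and apply Lemma~\ref{lem:common-point-shadow} with \((t,m)=(2,2)\) and \(s=k+2\), so the target is \(\binom{s+2}{3}=\binom{k+4}{3}\). Fix \(W\in\mathcal F\) and split into cases.

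\emph{Case 1: \(|W\setminus A|\le2\) for all \(A\in\mathcal F\).} Part~(i) of the lemma gives the bound directly.

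\emph{Case 2: some \(A\) has \(|W\setminus A|\ge3\) and \(|W\cap A|\ge2\).} Part~(ii) gives
\[
|\partial_3(J*\mathcal F)|\ge\binom{k+4}{3}+\frac{(k+1)k-24}{2}.
\]
The error term is nonnegative since \(k\ge5\) gives \(k(k+1)\ge30\).

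\emph{Case 3: some \(A\) has \(|W\setminus A|\ge3\) and \(r=|W\cap A|\le1\).} Part~(iii) gives
\[
|\partial_3(J*\mathcal F)|\ge3\binom{k+2}{3}-\binom{k+1}{3}-2.
\]
Let \(f(k)\) denote this bound minus \(\binom{k+4}{3}\). Then \(f(6)=11\). The successive difference of \(f\) is \(3\binom{k+2}{2}-\binom{k+1}{2}-\binom{k+4}{2}\), a quadratic that is positive for \(k\ge6\), so \(f(k)>0\) for all \(k\ge6\). The difficulty is \(k=5\), where \(f(5)=83-84=-1\). This is the main obstacle, and it requires a refined argument.

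\emph{The case \(k=5\).} If \(r=0\), then in the inclusion–exclusion estimate of part~(iii) we have \(g_2(r)=0\) rather than \(g_2(1)=1\). This gains one triple and gives \(84\).

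So assume \(r=1\), with \(W\cap A=\{x\}\). Take \(T=\{x,y\}\) with \(y\in W\setminus A\), and let \(B\) be a block avoiding \(T\). Then \(u=|W\cap B|\le3\), \(v=|A\cap B|\le4\), and \(u+v\le5\). The maximum of \(g_2(u)+g_2(v)\) under these constraints is \(21\), attained only at \((u,v)=(1,4)\); every other admissible pair is strictly smaller, which again gives \(\ge84\).

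Hence in the extremal situation \(B=(A\setminus\{x\})\cup\{p\}\) for some \(p\in W\setminus\{x,y\}\). By the symmetric choice \(T'=\{x,z\}\) with \(z\in A\setminus W\), either we already reach \(84\), or there is a block \(B'=(W\setminus\{x\})\cup\{q\}\) with \(q\in A\setminus\{x,z\}\).

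In the latter case we look for a triple of \(B'\) outside \(J\cup W\), \(J\cup A\) and \(J\cup B\). Take \(\{w_1,w_2,q\}\) with \(w_1,w_2\in W\setminus\{x,p\}\); such points exist because \(|W\setminus\{x,p\}|=3\).
\begin{itemize}
\item It is not in \(J\cup W\), since \(q\notin W\).
\item It is not in \(J\cup A\), since \(w_1\notin A\).
\item It is not in \(J\cup B\), since \(B\cap W=\{p\}\).
\end{itemize}
So the four blocks together have at least \(83+1=84=\binom{9}{3}\) triples, which completes the case \(k=5\).

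The remaining work is to check carefully that every non-extremal subcase leaves at least one triple of slack in the three-block inclusion–exclusion count.
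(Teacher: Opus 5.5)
Your proof is correct. Cases 1--3 and the range $k\ge6$ match the paper's argument. Your treatment of the critical case $k=5$, however, takes a genuinely different route.

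The paper uses a single stability argument. It picks a $2$-set $T$ containing $W\cap A$ and gets the three-block count $\ge83$. It then assumes $|\partial_3(J*\mathcal F)|\le83$, so equality holds throughout, which forces $|W\cap A|=1$ and the configuration $\widetilde W=J\cup\{x\}\cup P$, $\widetilde A=J\cup\{x\}\cup Q$, $\widetilde B=J\cup Q\cup\{p_0\}$. Finally, the shadow inclusion $\binom{\widetilde D}{3}\subseteq\binom{\widetilde W}{3}\cup\binom{\widetilde A}{3}\cup\binom{\widetilde B}{3}$ for an arbitrary member $\widetilde D$ shows that every member of $\mathcal F$ is one of $W,A,B$. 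Hence $\{x,p_0\}$ is a transversal, a contradiction.

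You argue differently in two places:
\begin{itemize}
\item For $r=0$, you note that the term $\binom{2}{3}=0$ supplies the missing triple directly.
\item For $r=1$, you apply $\tau(\mathcal F)\ge3$ twice, with $T=\{x,y\}$, $y\in W\setminus A$, and with $T'=\{x,z\}$, $z\in A\setminus W$. The enumeration of admissible pairs (maximum $21$, attained only at $(u,v)=(1,4)$, respectively $(4,1)$) pins down $B$ and $B'$. The explicit triple $\{w_1,w_2,q\}\subseteq B'$ then gives the $84$th triple.
\end{itemize}
I checked these enumerations and the three non-membership claims ($q\notin W$, $w_1\notin A$, and $w_1\notin B$ because $B\cap W=\{p\}$), and they all hold. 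So the ``remaining work'' you flag at the end is already done.

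Your version is more local and constructive: it needs only four named blocks and one explicit triple, and never classifies an arbitrary member of $\mathcal F$. The paper's equality analysis instead handles $r=0$ and $r=1$ uniformly. It also yields a structural fact: a family whose lifted shadow has only $83$ triples in this situation must consist of just $W,A,B$. In effect, your $B'$ is the explicit witness behind the paper's final appeal to $\tau(\mathcal F)\ge3$.
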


\begin{proof}
Fix $W\in\mathcal F$. If $|W\setminus A|\le2$ for every $A\in\mathcal F$, Lemma~\ref{lem:common-point-shadow}(i) applies. Otherwise choose $A$ with $|W\setminus A|\ge3$. If $|W\cap A|\ge2$, part (ii), with $s=k+2$, gives
\[|\partial_3(J*\mathcal F)|\ge\binom{k+4}{3}+\frac{k(k+1)-24}{2} \ge\binom{k+4}{3}.\]
It remains to consider \(|W\cap A|\le1\). By Lemma~\ref{lem:common-point-shadow}(iii),
\[|\partial_3(J*\mathcal F)|\ge3\binom{k+2}{3}-\binom{k+1}{3}-2.\]
The difference between the right-hand side and $\binom{k+4}{3}$ is
\[f(k)=\frac{k^3-19k-36}{6}.\]
Since $f(6)=11$ and $f(k+1)-f(k)=(k+3)(k-2)/2>0$ for $k\ge6$, we have \(f(k)>0\) for all \(k\ge6\). Hence it remains only to
consider the case \(k=5\).

Let \(J=\{\alpha,\beta\}\), \(\widetilde W=J\cup W\), and \(\widetilde A=J\cup A\). Choose a \(2\)-subset
\(T\subseteq W\cup A\) containing \(W\cap A\) and meeting both blocks. By \(\tau(\mathcal F)\ge 3\) and \(|T|=2\), there exists
\(B\in\mathcal F\) disjoint from \(T\). Let \(\widetilde B=J\cup B\). Then \(|\widetilde W\cap \widetilde A|\le3\) and \(|\widetilde W\cap \widetilde A\cap \widetilde B|=2\).
Let $u=|\widetilde W\cap \widetilde B|$ and $v=|\widetilde A\cap \widetilde B|$. Then $u,v\le6$ and $u+v\le9$. Discrete convexity gives $\binom{u}{3}+\binom{v}{3}\le\binom{6}{3}+\binom{3}{3}=21$. Therefore, 
\[\left|\binom{\widetilde W}{3}\cup\binom{\widetilde A}{3}\cup\binom{\widetilde B}{3}\right|
\ge3\binom{7}{3}-1-21=83.\]
Suppose that $|\partial_3(J*\mathcal F)|\le83$. Then equality holds throughout the above estimates. In particular,
\(|W\cap A|=1\) and \(\{u,v\}=\{3,6\}\).
Let \(x\) denote the unique element of \(W\cap A\). By symmetry, we
may assume that
\[
    \begin{aligned}
          \widetilde W&=J\cup\{x\}\cup P,\\
          \widetilde A&=J\cup\{x\}\cup Q,\\
          \widetilde B&=J\cup Q\cup\{p_0\},
    \end{aligned}
\]
where \(P,Q\) are disjoint \(4\)-sets and \(p_0\in P\).
Let \(\widetilde D=J\cup F\) be any member of \(J*\mathcal F\).
Since equality holds,
\[\binom{\widetilde D}{3}\subseteq\binom{\widetilde W}{3}\cup\binom{\widetilde A}{3}\cup\binom{\widetilde B}{3}.\]
In particular,
\(\widetilde D\subseteq\widetilde W\cup\widetilde A\cup\widetilde B\).

If \(\widetilde D\cap(P\setminus\{p_0\})\ne\varnothing\), choose \(z\) in this intersection. Since \(z\) belongs to neither \(\widetilde A\) nor \(\widetilde B\), every triple of \(\widetilde D\) containing \(z\) must belong to \(\binom{\widetilde W}{3}\). Thus \(\{\alpha,z,y\}\subseteq\widetilde W\) for every \(y\in F\setminus\{z\}\), which implies \(\widetilde D\subseteq\widetilde W\). Therefore \(\widetilde D=\widetilde W\).
Otherwise, \(\widetilde D\cap(P\setminus\{p_0\})=\varnothing\), and hence \(\widetilde D\subseteq J\cup\{x,p_0\}\cup Q\).
Since \(|\widetilde D|=7\), if \(\widetilde D\notin\{\widetilde A,\widetilde B\}\), then \(\widetilde D\) contains \(x,p_0\) and some \(q\in Q\). However, the triple \(\{x,p_0,q\}\) belongs to none of \(\widetilde W,\widetilde A,\widetilde B\), contradicting the shadow inclusion. Hence \(\widetilde D\in\{\widetilde A,\widetilde B\}\).

Therefore every member of \(J*\mathcal F\) is one of \(\widetilde W,\widetilde A,\widetilde B\). Consequently, \(\{x,p_0\}\) meets every member of \(\mathcal F\), contradicting \(\tau(\mathcal F)\ge3\). Hence
\[|\partial_3(J*\mathcal F)|\ge84=\binom93 .\]
This proves the case where \(k=5\).
\end{proof}

The preceding propositions, together with Lemma~\ref{lem:common-point-shadow}, yield the following mixed-weight \(3\)-shadow bound. Define
\[
   \mu(t)=
        \begin{cases}
            11, & t=3,\\
            w_0(t), & t\ne3.
        \end{cases}
\]

\begin{theorem}\label{thm:mixed-three-shadow}
Let \(t\ge1\) and \(w\ge\mu(t)\). If \(\varnothing\ne\mathcal H\subseteq \binom{V}{w}\cup\binom{V}{w-1}\cup\binom{V}{w-2}\)
satisfies \(\tau(\mathcal H)\ge t+1\), then
    \[\Psi_3(\mathcal H)\ge\binom{w+t}{3}.\]
Moreover, equality is attained by \(\mathcal H=\binom{S}{w}\) whenever \(S\subseteq V\) and \(|S|=w+t\).
\end{theorem}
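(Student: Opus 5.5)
First apply Lemma~\ref{lem:mixed-reduction} to \(\mathcal H\), with \(J=\{\alpha,\beta\}\). This is allowed since \(w\ge\mu(t)\ge5\). It gives a nonempty \(\mathcal F\subseteq\binom V{w-2}\) with \(\tau(\mathcal F)\ge t+1\) and \(|\partial_3(J*\mathcal F)|\le\Psi_3(\mathcal H)\). It therefore suffices to show \(|\partial_3(J*\mathcal F)|\ge\binom{w+t}{3}\). Put \(k=w-2\) and \(m=2\), so \(s=w\). Since \(\tau(\mathcal F)\ge t+1\) forces \(k\ge t+1\), and in fact the thresholds give \(k\ge t+2\), Lemma~\ref{lem:common-point-shadow} applies. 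For \(t=1\), Proposition~\ref{prop:base-t1} with \(k=w-2\ge3\) gives \(\binom{k+3}{3}=\binom{w+1}{3}\). For \(t=2\), Proposition~\ref{prop:base-t2} with \(k=w-2\ge5\) gives \(\binom{w+2}{3}\). So assume \(t\ge3\).

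Fix \(W\in\mathcal F\) and split into cases exactly as in Theorem~\ref{thm:three-shadow-transversal}. If \(|W\setminus A|\le t\) for all \(A\), then part~(i) gives \(\binom{s+t}{3}=\binom{w+t}{3}\). If some \(A\) has \(|W\setminus A|\ge t+1\) and \(|W\cap A|\ge t\), then part~(ii) gives an excess of \(\frac{(w-1)(w-2)-2t^2(t+1)}{2}\). This is nonnegative precisely because \(w\ge w_0(t)\), and \(\mu(t)\ge w_0(t)\).

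The remaining case is \(|W\cap A|<t\). Part~(iii) with \(m=2\) gives
\[|\partial_3(J*\mathcal F)|\ge 3\binom w3-\binom{w-1}{3}-\binom{t+1}{3}-1.\]
Define \(E_t(w)=3\binom w3-\binom{w-1}3-\binom{t+1}3-1-\binom{w+t}3\). We need \(E_t(w)\ge0\) for \(w\ge\mu(t)\). Note that \(E_t(w)-D_t(w)=\binom{t-1}{3}-\binom{t+1}{3}-1=-(t-1)^2-1\), and the successive differences of \(E_t\) are the same as those of \(D_t\) in \eqref{eq:Dt-difference}. So I will proceed as in the uniform case: check the starting values and then use the positive difference.

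For \(t=3\) we have \(w\ge11\) and \(E_3(11)=D_3(11)-5=6>0\). This is exactly why \(\mu(3)=11\): at \(w=10\) one gets \(E_3(10)<0\) and there is no analogue of Lemma~\ref{lem:exceptional-shadow}. For \(t=4,5\), the values are \(E_4(15)=31-10=21\) and \(E_5(19)=63-17=46\), and the difference is positive from there on. For \(t\ge6\), again \(w\ge4t\), and \(6E_t(4t)=2t^3+33t^2-41t+12-6((t-1)^2+1)=2t^3+27t^2-29t>0\). The difference stays positive for \(w\ge4t\) as before.

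The main obstacle is this verification of \(E_t(w)\ge0\) in the small-intersection case, together with checking that the hypotheses \(k\ge t+2\) (and \(k\ge2t+1\) where part~(ii) needs it) hold. The latter follows because \(w_0(t)\) is large compared with \(t\). For equality, take \(\mathcal H=\binom Sw\). Then \(\widehat{\mathcal H}=\mathcal H\) and \(\Psi_3(\mathcal H)=\binom{w+t}{3}\), as in Theorem~\ref{thm:three-shadow-transversal}.
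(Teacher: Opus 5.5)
Your proof is correct and follows the paper's argument: the reduction via Lemma~\ref{lem:mixed-reduction}, the base cases $t=1,2$ from Propositions~\ref{prop:base-t1} and~\ref{prop:base-t2}, and for $t\ge3$ the three cases of Lemma~\ref{lem:common-point-shadow} with $m=2$ and $s=w$. The only difference is in the small-intersection case, where you write the deficit as $D_t(w)-(t-1)^2-1$ and reuse the checks from Theorem~\ref{thm:three-shadow-transversal}, while the paper verifies it uniformly from $w=4t-1$; your values $E_3(11)=6$, $E_4(15)=21$, $E_5(19)=46$ agree with the paper's formula $\frac{(t-3)(t+1)(t+5)}{3}+6$.
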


\begin{proof}
By Lemma~\ref{lem:mixed-reduction}, it suffices to prove \(|\partial_3(J*\mathcal F)|\ge\binom{w+t}{3}\) for every nonempty \(\mathcal F\subseteq\binom{V}{w-2}\) with \(\tau(\mathcal F)\ge t+1\). The cases \(t=1,2\) follow from Propositions~\ref{prop:base-t1} and~\ref{prop:base-t2}.

Let \(t\ge3\) and \(k=w-2\). By the definition of \(\mu(t)\), \(k\ge t+2\) and \((w-1)(w-2)\ge2t^2(t+1)\).
Fix \(W\in\mathcal F\). If \(|W\setminus A|\le t\) for every \(A\in\mathcal F\), Lemma~\ref{lem:common-point-shadow}(i) with \(m=2\) gives the desired bound.
Otherwise, choose \(A\in\mathcal F\) with \(|W\setminus A|\ge t+1\). If \(|W\cap A|\ge t\), Lemma~\ref{lem:common-point-shadow}(ii) gives
\[|\partial_3(J*\mathcal F)| \ge\binom{w+t}{3} +\frac{(w-1)(w-2)-2t^2(t+1)}{2} \ge\binom{w+t}{3}.\]
If \(|W\cap A|<t\), Lemma~\ref{lem:common-point-shadow}(iii) gives
\[|\partial_3(J*\mathcal F)|\ge3\binom{w}{3}-\binom{w-1}{3}-\binom{t+1}{3}-1.\]
Since \(w\ge\mu(t)\ge4t-1\), the difference between the right-hand side and \(\binom{w+t}{3}\) is increasing
in \(w\).  Indeed, its successive difference is
\[3\binom{w}{2}-\binom{w-1}{2}-\binom{w+t}{2}=\frac{w^2-(2t-1)w-t^2+t-2}{2}>0\]
for \(w\ge4t-1\). At \(w=4t-1\), this difference equals
\[
    \begin{aligned}
          &3\binom{4t-1}{3}-\binom{4t-2}{3}-\binom{t+1}{3}-1-\binom{5t-1}{3}\\
          &\qquad=\frac{(t-3)(t+1)(t+5)}{3}+6>0.
    \end{aligned}
\]
Therefore, the desired inequality follows.

Finally, choose \(S\subseteq V\) with \(|S|=w+t\), and let \(\mathcal H=\binom{S}{w}\). Then
\[\tau(\mathcal H)=t+1,\quad \Psi_3(\mathcal H)=\binom{w+t}{3}.\]
Thus, equality is attained.
\end{proof}

Theorem~\ref{thm:mixed-three-shadow} gives a lower bound for the lifted \(3\)-shadow of each normal information class. We use two fixed points \(\alpha,\beta\notin X\) to lift all normal information classes. The following lemma shows that, if an exceptional class exists, at least \(\binom{w+t}{3}\) triples remain outside the union of their lifted \(3\)-shadows.

\begin{lemma}\label{lem:exceptional-reserve}
Let \(t\ge1\), \(w\ge\mu(t)\), and \(n\ge w+t\). Let \(X\) be an \(n\)-element set, and let \(\alpha,\beta\notin X\) be distinct
points. If an \((n,t,w,w-3)\)-LPECC code on \(X\) has at least two information classes, including an exceptional class, then at least \(\binom{w+t}{3}\) triples in \(\binom{X\cup\{\alpha,\beta\}}{3}\) are lie outside the union of the lifted \(3\)-shadows of the normal classes.
\end{lemma}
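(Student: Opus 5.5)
The plan is to show that the lifted exceptional class has a $3$-shadow disjoint from the lifted shadows of all normal classes, and that this shadow has at least $\binom{w+t}{3}$ triples. Let $\mathcal E$ be the exceptional class and fix $B_0\in\mathcal E$ with $|B_0|\le w-3$. By Lemma~\ref{lem:lpecc-structural-w3}(ii), $w-5\le|B_0|$. For $B\in\mathcal E$ with $|B|=w-j$, define a lift $\widehat B$:
\begin{itemize}
\item if $j\le 2$, $\widehat B$ is the lift used for normal classes, namely $B$, $B\cup\{\alpha\}$ or $B\cup\{\alpha,\beta\}$;
\item if $j\ge3$, $\widehat B=B\cup\{\alpha,\beta\}$.
\end{itemize}
Write $\widehat{\mathcal E}$ for the family of these lifts. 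The reserved triples will be $\partial_3\widehat{\mathcal E}$.

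\textbf{Disjointness (routine).} Let $A$ be a block of a normal class with $|A|=w-i$, $i\in\{0,1,2\}$, and let $B\in\mathcal E$ with $|B|=w-j$. By \eqref{eq:deficit}, $i+j+2|A\cap B|\le5$. The lifts share exactly $\min(i,j')$ points of $\{\alpha,\beta\}$, where $j'=\min(j,2)$, so
\[
|\widehat A\cap\widehat B|=|A\cap B|+\min(i,j')\le |A\cap B|+\tfrac{i+j}{2}\le\tfrac52 .
\]
Hence $|\widehat A\cap\widehat B|\le2$, and no triple of $\partial_3\widehat{\mathcal E}$ lies in a lifted normal block. All these triples lie in $\binom{X\cup\{\alpha,\beta\}}{3}$, so it remains to prove $|\partial_3\widehat{\mathcal E}|\ge\binom{w+t}{3}$, using only $\tau(\mathcal E)\ge t+1$.

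\textbf{Shadow bound.} I would make $\widehat{\mathcal E}$ uniform and then apply the machinery of Lemma~\ref{lem:mixed-reduction}.
\begin{enumerate}
\item Compress toward $\alpha,\beta$ exactly as in Lemma~\ref{lem:mixed-reduction}. Every member then contains $J=\{\alpha,\beta\}$, and compression only shrinks the $V$-parts, so $\tau$ does not drop.
\item This gives a family $J*\mathcal F$ whose $\mathcal F$ is nonuniform, with sizes in $\{w-2,\dots,w-5\}$.
\item To use Lemma~\ref{lem:common-point-shadow} I need uniformity. The first idea is to pad each short member of $\mathcal F$ by fresh auxiliary points outside $V\cup J$. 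Padding cannot decrease $\tau$, since every padded set contains the original one. It does, however, increase the shadow, so padding is the wrong direction for a lower bound.
\item So instead I would argue directly in the style of Lemma~\ref{lem:common-point-shadow}(ii)--(iii). Start from $\widetilde B_0=B_0\cup J$, which has at least $w-3$ points. Pick a $t$-set $T\subseteq B_0$ and a member $B_1\in\mathcal E$ avoiding $T$. If needed, pick a further $t$-set meeting $B_0\cup B_1$ appropriately and a member $B_2$ avoiding it. Then bound $|\partial_3\widehat{\mathcal E}|$ by inclusion--exclusion over $\widetilde B_0,\widehat B_1,\widehat B_2$ together with discrete convexity of $g_m$.
\item The target is an estimate of the form $3\binom{w-3}{3}-O\bigl(\binom{w-t}{3}\bigr)$. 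Since $w\ge\mu(t)\gtrsim 4t$, this should exceed $\binom{w+t}{3}$, with small $t$ (and $t=3$, $w=11$) checked separately.
\end{enumerate}

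\textbf{Main obstacle.} The hard step is this last inequality. The exceptional blocks can be as short as $w-5$, and $\mathcal E$ may mix sizes, so the clean Vandermonde count of case~(i) of Lemma~\ref{lem:common-point-shadow} is unavailable. The constant deficit of about $3$ points per block must be absorbed by the slack $\frac{(w-1)(w-2)-2t^2(t+1)}{2}$ and by the third block in the inclusion--exclusion. If that fails for the smallest $w$, I would try a fallback. Every triple of the form $\{\alpha,x,y\}$ or $\{\alpha,\beta,x\}$ with $x,y\in B_0$ is uncovered: by \eqref{eq:deficit}, any normal block meeting $B_0$ meets it in exactly one point and has size $w$, so its lift contains neither $\alpha$ nor $\beta$. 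These extra triples could supply the missing amount.
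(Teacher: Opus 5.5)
\textbf{There is a genuine gap.} Your disjointness computation is correct, but the statement you reduce to is false, so Steps 4--5 cannot be completed by any refinement of the inclusion--exclusion. The lifted shadow of the exceptional class alone need not reach $\binom{w+t}{3}$. Take $\mathcal E=\binom{S}{w-3}$ with $|S|=w+t-3$. Then $\tau(\mathcal E)=t+1$, and every lift is $B\cup\{\alpha,\beta\}$ with $B\subseteq S$. Hence $\partial_3\widehat{\mathcal E}\subseteq\binom{S\cup\{\alpha,\beta\}}{3}$, which has size $\binom{w+t-1}{3}<\binom{w+t}{3}$. (With $\binom{S}{w-5}$ and $|S|=w+t-5$ it drops to $\binom{w+t-3}{3}$.)

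This class occurs in actual codes. Add the normal class $\binom{S'}{w}$ with $|S'|=w+t$ and $|S\cap S'|\le 1$. Then \eqref{eq:deficit} reads $3+0+2|A\cap B|\le 5$, which holds. Your fallback does not repair this either: the triples $\{\alpha,x,y\}$ and $\{\alpha,\beta,x\}$ with $x,y\in B_0$ lie inside $\widehat B_0=B_0\cup\{\alpha,\beta\}$, so they are already counted in $\partial_3\widehat{\mathcal E}$.

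\textbf{The missing idea.} You must reserve triples that leave the exceptional block, and this forces you to use $n$. Fix $A\in\mathcal E$ with $|A|=w-a$, $a\in\{3,4,5\}$.

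For $a=3$, \eqref{eq:deficit} shows that a lifted normal member contains at most one point of $A$, and never contains a point of $A$ together with $\alpha$ or $\beta$. Hence two kinds of triples are uncovered: every triple with at least two points in $A$ (third point anywhere in $X\cup\{\alpha,\beta\}$), and every triple with exactly one point in $A$ and at least one in $\{\alpha,\beta\}$. This gives
\[
R(n,w)=\binom{w-3}{3}+\binom{w-3}{2}(n-w+5)+(w-3)\bigl(2(n-w+3)+1\bigr)
\]
uncovered triples. For $a=4$ (resp.\ $a=5$), the $(w-3)$-set $A\cup\{\beta\}$ (resp.\ $A\cup\{\alpha,\beta\}$) is disjoint from every lifted normal member. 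The triples meeting it are at least as many as $R(n,w)$.

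This count grows with $n$. The required inequality comes from $n\ge\max\{2w-5,w+t\}$, where $n\ge 2w-5$ follows from $|A\triangle B|\ge 2w-5$ for blocks of different classes. One then checks $R(2w-5,w)\ge\binom{w+t}{3}$ for $w\ge\mu(t)$. Your plan never uses $n$ or this lower bound on it, and the count genuinely depends on it.
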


\begin{proof}
Choose \(A\) from the exceptional information class with \(|A|\le w-3\), and let \(a=w-|A|\). By Lemma~\ref{lem:lpecc-structural-w3}, \(a\in\{3,4,5\}\). For any member \(B\) of a normal information class, let
\[b=w-|B|\in\{0,1,2\}, \quad r=|A\cap B|.\]
The same lemma gives
    \begin{equation}\label{eq:exceptional-deficit}
           a+b+2r\le5.
    \end{equation}
 Let \(J=\{\alpha,\beta\}\). We consider the possible values of \(a\).   
 \begin{itemize}
        \item[\normalfont(i).] Suppose that \(a=3\). By~\eqref{eq:exceptional-deficit}, \(r\le1\) if \(b=0\), while \(r=0\) if \(b=1,2\). Thus a normal lifted member contains at most one point of \(A\), and it cannot contain both a point of \(A\) and a point of \(J\). Consequently, every triple containing at least two points of \(A\), or exactly one point of \(A\) and at least one point of \(J\), is uncovered by the normal lifted \(3\)-shadows. Hence the number of such triples in \(X\cup J\) is
              \[\begin{aligned}
                      R(n,w)={}&\binom{w-3}{3}+\binom{w-3}{2}(n-w+5)\\
                      &+(w-3)\bigl(2(n-w+3)+1\bigr)\\
                      ={}&\frac{w-3}{6}\bigl(3nw-2w^2+6w+2\bigr)
              \end{aligned}\]

        \item[\normalfont(ii).] Suppose that \(a=4\). Then \eqref{eq:exceptional-deficit} implies \(b\le1\) and \(r=0\). Hence every normal lifted member is disjoint from \(A\cup\{\beta\}\). If \(a=5\), then \(b=r=0\), and every normal lifted member is disjoint from \(A\cup J\). In either case, there is a fixed \((w-3)\)-set disjoint from every normal lifted member. Hence at least Thus at least \( \binom{n+2}{3}-\binom{n-w+5}{3}\) triples are uncovered.
\end{itemize}   
Since
\[\binom{n+2}{3}-\binom{n-w+5}{3}-R(n,w)=(w-3)\binom{n-w+3}{2}\ge0,\]
there are at least \(R(n,w)\) uncovered triples in each case.
It remains to prove that \(R(n,w)\ge\binom{w+t}{3}\). The distance condition and the existence of at least two information classes imply \(n\ge2w-5\). Together with \(n\ge w+t\), this yields \(n\ge\max\{2w-5,w+t\}\).
As \(R(n,w)\) is increasing in \(n\), it suffices to consider \(n=\max\{2w-5,w+t\}\).

First suppose \(t\ge3\).We claim that \(\mu(t)\ge2t+5\). For \(t=3\), we have \(\mu(3)=11=2t+5\). For \(t\ge4\), by the definition of \(\mu(t)\), \(\mu(t)=\left\lceil(3+\sqrt{1+8t^2(t+1)})/2\right\rceil\ge2t+5\). Hence \(2w-5\ge w+t\), and it suffices to consider \(n=2w-5\).
Define
\[E(w,t)=6\left(R(2w-5,w)-\binom{w+t}{3}\right).\]
For fixed \(t\), viewing \(E(w,t)\) as a polynomial in \(w\), we have
\[\frac{1}{3}\frac{\partial E}{\partial w}=3w^2-2(t+6)w-t^2+2t+9.\]
The right-hand side is increasing for \(w\ge2t+5\) and is positive at
\(w=2t+5\). Hence \(E(w,t)\) is increasing there. Since
\[E(2t+5,t)=(t+1)(5t^2+43t+54)>0,\]
we obtain \(R(n,w)\ge\binom{w+t}{3}\) for \(t\ge3\).

For \(t=1\) and \(w=5\), we have \(n\ge6\), and \(R(6,5)=24>\binom63\).
For \(t=1\), \(w\ge6\), and for \(t=2\), \(w\ge7\), we have \(2w-5\ge w+t\). Thus it suffices to consider \(n=2w-5\). At the respective starting points,
\[
    \begin{aligned}
           E(6,1)&=66, & \frac{1}{3}E_w(6,1)&=34,\\
           E(7,2)&=36, &\frac{1}{3}E_w(7,2)&=44,
    \end{aligned}
\]
where \(E_w=\partial E/\partial w\). By the expression for \(E_w\) above, it remains positive as \(w\) increases in both cases. Hence \(E(w,t)>0\), proving \(R(n,w)\ge\binom{w+t}{3}\) for \(t=1,2\).

\end{proof}

\subsection{Upper Bounds for LPECC Codes}

We now apply the nonuniform \(3\)-shadow bound established in the previous subsection to LPECC codes. Since LPECC codes may contain transition vectors of different weights, we need to take into account the possible existence of an exceptional information class in addition to the normal information classes. Combining the shadow bound for normal classes with the contribution of the exceptional class, we obtain a new upper bound for LPECC codes with \(e=w-3\), which is strictly stronger than the previous bound of Zhao and Zhang~\cite{2025} whenever both bounds apply.
Combining Theorem~\ref{thm:mixed-three-shadow} with Lemma~\ref{lem:exceptional-reserve}, we obtain the following upper bound.

\begin{theorem}\label{thm:LPECC-upper-w3}
Let $t\ge1$, $w\ge\mu(t)$, and $n\ge w+t$. Then \begin{equation}\label{eq:LPECC-upper-w3}
C(n,t,w,w-3)\le\left\lfloor\frac{\binom{n+2}{3}}{\binom{w+t}{3}}\right\rfloor .
\end{equation}
\end{theorem}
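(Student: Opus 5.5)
The plan is a double-counting argument in the enlarged ground set \(X\cup J\), where \(J=\{\alpha,\beta\}\) and \(\alpha,\beta\notin X\) are fixed. This set has \(n+2\) points, so \(\binom{X\cup J}{3}\) has \(\binom{n+2}{3}\) triples.

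Let \(\mathcal P_1,\ldots,\mathcal P_M\) be the information classes of an \((n,t,w,w-3)\)-LPECC code. If \(M=1\), the bound holds trivially, since \(n\ge w+t\) gives \(\binom{n+2}{3}\ge\binom{w+t}{3}\), and hence the right-hand side is at least \(1\). So assume \(M\ge2\), which makes Lemma~\ref{lem:lpecc-structural-w3} available.

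\textbf{Step 1: normal classes.} Each normal class \(\mathcal P_i\) lies in \(\binom Xw\cup\binom X{w-1}\cup\binom X{w-2}\). We lift it with the same two points \(\alpha,\beta\): blocks of size \(w-1\) receive \(\alpha\), and blocks of size \(w-2\) receive both \(\alpha\) and \(\beta\). Property \(A(t)\) gives \(\tau(\mathcal P_i)\ge t+1\). Theorem~\ref{thm:mixed-three-shadow} then gives \(\Psi_3(\mathcal P_i)\ge\binom{w+t}{3}\).

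\textbf{Step 2: the lifted shadows of distinct normal classes are pairwise disjoint.} This is the step needing care. Take blocks \(A\in\mathcal P_i\) and \(B\in\mathcal P_j\) with \(i\ne j\), and write \(a=w-|A|\), \(b=w-|B|\), both in \(\{0,1,2\}\). Their lifts are \(A\cup J_a\) and \(B\cup J_b\), where \(J_0=\varnothing\), \(J_1=\{\alpha\}\) and \(J_2=J\). The lifts intersect in
\[
|A\cap B|+|J_a\cap J_b| = |A\cap B|+\min\{a,b\}.
\]
Inequality \eqref{eq:deficit} gives \(a+b+2|A\cap B|\le5\). Since \(\min\{a,b\}\le (a+b)/2\), we get
\[
2\bigl(|A\cap B|+\min\{a,b\}\bigr)\le 2|A\cap B|+a+b\le5,
\]
so the lifts meet in at most \(2\) points. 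Hence no triple lies in both lifted shadows. This is exactly why the nested lifting (\(\alpha\) first, then \(\beta\)) is chosen: the common part of the lifts is controlled by \(\min\{a,b\}\), not by \(\max\{a,b\}\).

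\textbf{Step 3: the exceptional class and the count.} By Lemma~\ref{lem:lpecc-structural-w3}(iv), at most one class is exceptional.
\begin{itemize}
\item If no class is exceptional, all \(M\) lifted shadows are disjoint subsets of \(\binom{X\cup J}{3}\), so \(M\binom{w+t}{3}\le\binom{n+2}{3}\).
\item If one class is exceptional, the other \(M-1\) classes are normal. Lemma~\ref{lem:exceptional-reserve} supplies at least \(\binom{w+t}{3}\) triples of \(\binom{X\cup J}{3}\) outside the union of the normal lifted shadows. These reserved triples play the role of the missing class, so \((M-1)\binom{w+t}{3}+\binom{w+t}{3}\le\binom{n+2}{3}\).
\end{itemize}
In both cases, since \(M\) is an integer, taking floors gives \eqref{eq:LPECC-upper-w3}.

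The main obstacle is the disjointness check in Step 2 across blocks of different sizes. The \(\min\{a,b\}\) estimate handles it cleanly, and everything else is assembly of earlier results.
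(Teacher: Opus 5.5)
Your proposal is correct and follows the paper's proof essentially step for step. Both arguments lift the normal classes with the same nested pair $\alpha,\beta$ and apply Theorem~\ref{thm:mixed-three-shadow}. Both then prove pairwise disjointness of the lifted shadows from \eqref{eq:deficit}, and use Lemma~\ref{lem:exceptional-reserve} to reserve $\binom{w+t}{3}$ triples for a possible exceptional class. Your estimate $\min\{a,b\}\le (a+b)/2$ in Step 2 is only a symmetric rephrasing of the paper's assumption $a\le b$, which gives $2(r+a)\le 5+a-b\le 5$.
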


\begin{proof}
Let \(\mathcal P_1,\ldots,\mathcal P_M\subseteq 2^X\), where \(|X|=n\), be the information classes of the code. 
By Property \(A(t)\), \(\tau(\mathcal P_i)\ge t+1\) for \(1\le i\le M\).
Let \(L=\binom{w+t}{3}\). Since \(n\ge w+t\), the desired bound is immediate when \(M=1\). Hence assume that \(M\ge2\).
By Lemma~\ref{lem:lpecc-structural-w3}, at most one information class contains a member of size at most \(w-3\), and all
remaining classes are normal. Lift every normal class using the same pair of new points \(\alpha,\beta\). By Theorem~\ref{thm:mixed-three-shadow}, we have
\(|\partial_3\widehat{\mathcal P}_i|\ge L\)
for each normal information class \(\mathcal P_i\).

We claim that the lifted \(3\)-shadows of distinct normal information classes are disjoint. Let \(A\in\mathcal P_i\) and
\(B\in\mathcal P_j\), where \(i\ne j\), and let
\[a=w-|A|,\quad b=w-|B|,\quad r=|A\cap B|.\]
Then \(a,b\in\{0,1,2\}\), and~\eqref{eq:deficit} gives \(a+b+2r\le5\). By symmetry, assume \(a\le b\).
Exactly \(a\) of the newly added points are common to the two lifted sets. Therefore, 
\[2|\widehat A\cap\widehat B|=2(r+a)\le5+a-b\le5,\]
which implies \(|\widehat A\cap\widehat B|\le2\).

If all \(M\) information classes are normal, their pairwise disjoint lifted \(3\)-shadows give \(ML\le\binom{n+2}{3}\).
If an exceptional class exists, there are \(M-1\) normal classes. By Lemma~\ref{lem:exceptional-reserve}, at least \(L\)
triples lie outside the union of their lifted \(3\)-shadows.
Hence
\[(M-1)L\le\binom{n+2}{3}-L,\]
which yields the same bound \(ML\le\binom{n+2}{3}\). Therefore,
\[M\le \left\lfloor \frac{\binom{n+2}{3}}{\binom{w+t}{3}}\right\rfloor.\]
This completes the proof.
\end{proof}

\begin{remark}\label{rem:LPECC-upper-comparison}
For binary LPECC codes with \(e=w-3\), Zhao and Zhang~\cite[Theorem~7]{2025} give
\[
    \begin{aligned}
           C(n,t,w,w-3)\le \frac{\binom{n}{3}}{\binom{w+t}{3}} +1+\frac{n}{w-2}+\frac{\binom{n}{2}}{\binom{w-1}{2}}
    \end{aligned}
\]
under the condition \(2\binom{w}{3}\ge\binom{w-t-1}{3}+\binom{w+t}{3},\) which is equivalent to \(w\ge2t(t+1)+2\).
In comparison, Theorem~\ref{thm:LPECC-upper-w3} gives
\[C(n,t,w,w-3)\le\left\lfloor\frac{\binom{n+2}{3}}{\binom{w+t}{3}}\right\rfloor\]
for $w\ge\mu(t)$ and $n\ge w+t$. Whenever both bounds apply, the latter is strictly smaller. Indeed,
\[
\frac{\binom{n+2}{3}-\binom n3}{\binom{w+t}{3}}=\frac{n^2}{\binom{w+t}{3}}
<\frac{n^2}{(w-1)(w-2)}<\frac{n}{w-2}+\frac{\binom n2}{\binom{w-1}{2}},
\]
where we used \(\binom{w+t}{3}>(w-1)(w-2)\) for \(t\ge1\) and \(w\ge5\).
Moreover, the sufficient condition on $w$ is substantially relaxed. For $t\ge4$,
\[\mu(t)=\left\lceil\frac{3+\sqrt{1+8t^2(t+1)}}{2}\right\rceil=(\sqrt{2}+o(1))t^{3/2}\]
as $t\to\infty$, whereas the condition of Zhao and Zhang is equivalent to $w\ge2t(t+1)+2$.
\end{remark}

Combining Theorem~\ref{thm:LPECC-upper-w3} with the known asymptotic lower bounds gives the following corollary.

\begin{corollary}\label{cor:LPECC-asymptotic-w3}
Let \(t\ge1\) and \(w\ge\mu(t)\) be fixed. Then \begin{equation}\label{eq:LPECC-asymptotic}
C(n,t,w,w-3) =(1+o(1))\frac{\binom{n}{3}}{\binom{w+t}{3}}
\qquad (n\to\infty).
\end{equation}
\end{corollary}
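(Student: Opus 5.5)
The plan is to sandwich \(C(n,t,w,w-3)\) between the CPECC lower bound and the new LPECC upper bound, then compare both with \(\binom{n}{3}/\binom{w+t}{3}\).

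\emph{Lower bound.} Every \((n,t,w,w-3)\)-CPECC code is an \((n,t,w,w-3)\)-LPECC code, so \(C(n,t,w,w-3)\ge C'(n,t,w,w-3)\). To bound \(C'\) from below I will not use Corollary~\ref{cor:CPECC-asymptotic-w3} verbatim, since it is stated under \(w\ge w_0(t)\) and \(\mu(t)\ge w_0(t)\) anyway. Instead I will reuse its construction, whose validity does not depend on \(w_0\). By Theorem~\ref{thm:packing-existence} with \(k=w+t>r=3\), there is a \(3\)-\((n,w+t,1)\) packing \(\mathcal S\) with \((1-o(1))\binom n3/\binom{w+t}{3}\) blocks. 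For each \(A\in\mathcal S\), the class \(\mathcal P_A=\binom{A}{w}\) satisfies \(\tau(\mathcal P_A)=t+1\). Because distinct blocks of \(\mathcal S\) share at most two points, blocks from distinct classes have symmetric difference at least \(2w-4\ge 2(w-3)+1\). Hence
\[C(n,t,w,w-3)\ge(1-o(1))\frac{\binom n3}{\binom{w+t}{3}}.\]

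\emph{Upper bound.} For \(n\ge w+t\), which holds for all large \(n\) since \(t,w\) are fixed, Theorem~\ref{thm:LPECC-upper-w3} gives
\[C(n,t,w,w-3)\le\frac{\binom{n+2}{3}}{\binom{w+t}{3}}.\]
I then use \(\binom{n+2}{3}/\binom n3=\frac{(n+2)(n+1)}{n(n-1)}\to1\), so the right-hand side equals \((1+o(1))\binom n3/\binom{w+t}{3}\).

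Putting the two bounds together proves \eqref{eq:LPECC-asymptotic}. There is no real obstacle here. The only point to watch is that the hypothesis \(w\ge\mu(t)\) is needed solely for the upper bound, while the construction giving the lower bound works for every \(w\).
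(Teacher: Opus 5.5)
Your proof is correct and follows the paper's argument. The paper likewise combines \(C(n,t,w,w-3)\ge C'(n,t,w,w-3)\ge(1-o(1))\binom n3/\binom{w+t}{3}\) (via Corollary~\ref{cor:CPECC-asymptotic-w3}, whose packing construction you simply inline) with the upper bound of Theorem~\ref{thm:LPECC-upper-w3} and the limit \(\binom{n+2}{3}/\binom n3\to1\); your remark that the construction needs no condition on \(w\) is accurate but immaterial here, since \(\mu(t)\ge w_0(t)\).
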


\begin{proof}
By Corollary~\ref{cor:CPECC-asymptotic-w3} and \(C(n,t,w,w-3)\ge C'(n,t,w,w-3)\),
\[(1-o(1))\frac{\binom n3}{\binom{w+t}{3}}\le C(n,t,w,w-3)\le\frac{\binom{n+2}{3}}{\binom{w+t}{3}} .
\]
Since\(\binom{n+2}{3}/\binom{n}{3}\to1\) as \(n\to\infty\), the result follows.


\end{proof}

\begin{remark}\label{rem:LPECC-asymptotic-comparison}
Zhao and Zhang~\cite[Corollary~5]{2025} obtained the same asymptotic formula under condition~\eqref{eq:old-condition}. Corollary~
\ref{cor:LPECC-asymptotic-w3} extends this asymptotic result to the range \(w\ge\mu(t)\).
\end{remark}

\section{Conclusion}\label{sec:conclusion}

In this paper, we studied binary CPECC and LPECC codes with
\(e=w-3\). For CPECC codes, we extended the range of validity of the upper bound previously obtained by Zhao and Zhang~\cite{2025} from the quadratic-order condition \(w\ge 2t(t+1)+2\) to the weaker condition \(w\ge w_0(t)\sim \sqrt{2}\,t^{3/2}\).
Using Steiner systems, we showed that this bound is attained whenever an \(S(3,w+t,n)\) exists and is asymptotically tight for fixed \(t\) and \(w\).
For LPECC codes, we established a new upper bound
\[C(n,t,w,w-3)\le\left\lfloor
\frac{\binom{n+2}{3}}{\binom{w+t}{3}}\right\rfloor\]
for \(w\ge\mu(t)\), where \(\mu(t)\sim \sqrt{2}\,t^{3/2}\).
This bound is strictly stronger than the previous bound of Zhao and Zhang whenever both bounds apply. Moreover, the asymptotic tightness of the CPECC constructions implies that the new LPECC bound is also asymptotically tight.

The proofs of these results are based on sharp uniform and mixed-weight \(3\)-shadow bounds. In particular, the uniform \(3\)-shadow bound gives the improved CPECC result, while the nonuniform \(3\)-shadow bound obtained through lifting and compression handles the mixed-weight structure of LPECC codes.
Consequently, for fixed \(t\) and \(w\) in the respective parameter ranges, the maximum sizes of both CPECC and LPECC codes satisfy
\[C'(n,t,w,w-3)\sim C(n,t,w,w-3)\sim\frac{\binom{n}{3}}{\binom{w+t}{3}}\quad (n\to\infty).\]
Thus, our results reduce the previous quadratic-order requirement on the weight parameter to order \(t^{3/2}\), and determine the asymptotic sizes of CPECC and LPECC codes for \(e=w-3\) in the considered parameter ranges.\\



\end{sloppypar}
\end{document}